\newcommand\Osoft{O^{\scriptscriptstyle \sim}\!}
\title{Improved Power Decoding of Interleaved One-Point Hermitian Codes}
\author{Sven Puchinger \and Johan Rosenkilde \and Irene Bouw}
\authorrunning{Sven Puchinger, Johan Rosenkilde, Irene Bouw}
\institute{
Sven Puchinger \at Institute of Communications Engineering, Ulm University, Germany \\ \email{sven.puchinger@uni-ulm.de}
\and
Johan Rosenkilde \at Department of Applied Mathematics and Computer Science, Technical University of Denmark \\
\email{jsrn@jsrn.dk}
\and
Irene Bouw \at Institute of Pure Mathematics, Ulm University, Germany \\
\email{irene.bouw@uni-ulm.de}
}
\date{Received: date / Accepted: date}
\def\ve#1{{\mathchoice{\mbox{\boldmath$\displaystyle #1$}}%
              {\mbox{\boldmath$\textstyle #1$}}%
              {\mbox{\boldmath$\scriptstyle #1$}}%
              {\mbox{\boldmath$\scriptscriptstyle #1$}}}}
\newcommand{\printalgoIEEE}[1]
{{\centering
\scalebox{0.97}{
\begin{tabular}{p{\textwidth}}
\begin{algorithm}[H]
 #1
\end{algorithm}
\end{tabular}
}
}
}
\newcommand{\ZZ}{\mathbb{Z}}
\definecolor{sunset}{rgb}{1,0.5,.05}
\definecolor{marine}{rgb}{0,0,.7}
\definecolor{navy}{rgb}{0,0,.5}
\definecolor{forest}{rgb}{0,.6,0}
\definecolor{brown}{rgb}{0.59, 0.29, 0.0}
\renewcommand{\H}{\mathcal{H}}
\newcommand{\NN}{\mathbb{N}}
\renewcommand{\L}{\mathcal{L}}
\newcommand{\Fqm}{\mathbb{F}_{q^m}}
\renewcommand{\H}{\mathcal{H}}
\newcommand{\Fqtwo}{\mathbb{F}_{q^2}}
\newcommand{\Pinf}{{P_\infty}}
\newcommand{\Pset}{\mathcal{P}}
\newcommand{\Pstar}{\mathcal{P}^\ast}
\renewcommand{\r}{\ve r}
\renewcommand{\c}{\ve c}
\newcommand{\e}{\ve e}
\newcommand{\CHerm}{\mathcal{C}_\mathcal{H}}
\newcommand{\Eset}{\mathcal{E}}
\newcommand{\degH}{\deg_{\mathcal{H}}}
\newcommand{\Rback}{\mathcal{R}}
\newcommand{\vr}{{\ve \nu}}
\newcommand{\mr}{{\ve \mu}}
\newcommand{\XiM}{{\ve \Xi}}
\newcommand{\FX}{\Fqtwo[X]}
\newcommand{\AM}[1]{\Amatrix^{(#1)}}
\newcommand{\AMe}[1]{A^{(#1)}}
\newcommand{\Amatrix}{\ve A}
\newcommand{\ddesigned}{d^\ast}
\renewcommand{\u}{\ve u}
\newcommand{\Code}{\mathcal{C}}
\newcommand{\PfailPower}{\mathrm{\hat{P}}_\mathrm{fail, IPD}}
\newcommand{\PfailGS}{\mathrm{\hat{P}}_\mathrm{fail, GS}}
\renewcommand{\i}{{\ve i}}
\renewcommand{\j}{{\ve j}}
\newcommand{\0}{{\ve 0}}
\newcommand{\len}[1]{{|#1|}}
\definecolor{mygreen}{rgb}{0,0.5,0}
\definecolor{myred}{rgb}{0.7,0,0}
\newcommand{\IntDegree}{h}
\newcommand{\mH}{{m_\mathrm{H}}}
\newcommand{\f}{\ve f}
\newcommand{\R}{\ve R}
\newcommand{\OmegaVec}{\ve \Omega}
\renewcommand{\a}{\ve a}
\renewcommand{\b}{\ve b}
\renewcommand{\smaller}{\preceq}
\newcommand{\Jset}{\mathcal{J}}
\newcommand{\Iset}{\mathcal{I}}
\renewcommand{\th}{^\mathrm{th}}
\newcommand\modop{\ \textnormal{mod}\ }
\newcommand{\CountLt}[1]{\binom{\IntDegree+#1-1}{\IntDegree}}
\newcommand{\CountLeq}[1]{\binom{\IntDegree+#1}{\IntDegree}}
\newcommand{\WeightedLt}[1]{\IntDegree\binom{\IntDegree+#1-1}{\IntDegree+1}}
\newcommand{\CountLtt}[1]{\tbinom{\IntDegree+#1-1}{\IntDegree}}
\newcommand{\CountLeqt}[1]{\tbinom{\IntDegree+#1}{\IntDegree}}
\newcommand{\WeightedLeqt}[1]{\IntDegree\tbinom{\IntDegree+#1}{\IntDegree+1}}
\newcommand{\WeightedLtt}[1]{\IntDegree\tbinom{\IntDegree+#1-1}{\IntDegree+1}}
\newcommand{\NumberOfVariables}{\mathrm{NV}}
\newcommand{\NumberOfEquations}{\mathrm{NE}}
\newcommand{\taumax}{\tau_\mathrm{max}}
\newcommand{\decodingradius}{t_\mathrm{new}}
\newcommand{\round}[1]{\lfloor #1 \rfloor}
\newcommand{\Amat}{\ve{A}}
\newcommand{\IntDegreeRS}{\IntDegree_{\mathrm{RS}}}
\newcommand{\IntDegreeH}{\IntDegree_{\mathrm{H}}}
\newcommand{\Omegasi}{\Omega_{s,\i}}
\begin{document}

\maketitle

\begin{abstract}
We propose a new partial decoding algorithm for $\IntDegree$-interleaved one-point Hermitian codes that can decode---under certain assumptions---an error of relative weight up to $1-(\tfrac{k+g}{n})^{\frac{\IntDegree}{\IntDegree+1}}$, where $k$ is the dimension, $n$ the length, and $g$ the genus of the code.
Simulation results for various parameters indicate that the new decoder achieves this maximal decoding radius with high probability.
The algorithm is based on a recent generalization of Rosenkilde's improved power decoder to interleaved Reed--Solomon codes, does not require an expensive root-finding step, and improves upon the previous best decoding radius by Kampf at all rates.
In the special case $\IntDegree=1$, we obtain an adaption of the improved power decoding algorithm to one-point Hermitian codes, which for all simulated parameters achieves a similar observed failure probability as the Guruswami--Sudan decoder above the latter's guaranteed decoding radius.
\end{abstract}

\keywords{Interleaved One-Point Hermitian Codes \and Power Decoding \and Collaborative Decoding \and 94B35 \and 14G50}

\section{Introduction}

One-point Hermitian ($1$-H) codes are algebraic geometry codes that can be decoded beyond half the minimum Goppa distance.
Most of their decoders are conceptually similar to their Reed--Solomon (RS) code analogs, such as the \emph{Guruswami--Sudan} (GS) algorithm \cite{guruswami1998improved} and \emph{power decoding} (PD) \cite{schmidt2010syndrome,kampf2012decoding,nielsen2015sub}.
For both RS and $1$-H codes, PD is only able to correct as many errors as the Sudan algorithm, which is a special case of the GS algorithm.
Recently \cite{nielsen2016power}, PD for RS codes was improved to correct as many errors as the GS algorithm.

An $\IntDegree$-interleaved $1$-H code is a direct sum of $\IntDegree$ many $1$-H codes.
By assuming that errors occur at the same positions in the constituent codewords (burst errors), it is possible to decode far beyond half the minimum distance \cite{kampf2014bounds}, which is inspired by decoding methods for interleaved RS codes \cite{krachkovsky1997decoding,schmidt2010syndrome}.
In the RS case, there have been many improvements on the decoding radius in the last two decades \cite{krachkovsky1997decoding,parvaresh2004multivariate,schmidt2007enhancing,schmidt2010syndrome,cohn2013approximate,wachterzeh2014decoding,puchinger2017decoding}, which have not all been adapted to $1$-H codes. The currently best-known decoding radius for interleaved RS codes is achieved by both the interpolation-based technique in \cite{parvaresh2004multivariate,cohn2013approximate} and the method based on improved PD in \cite{puchinger2017decoding}, where the latter has a smaller complexity since it does not rely on an expensive root-finding step.

In this paper, we adapt the decoder in \cite{puchinger2017decoding}, which is based on improved PD, to $\IntDegree$-interleaved $1$-H codes using the description of PD for $1$-H as in \cite{nielsen2015sub}.
Similar to the RS case, we derive a larger system of non-linear key equations (cf.~Section~\ref{sec:key_equations}) and reduce the decoding problem to a linear problem whose solution---under certain assumptions---agrees with the solution of the system of key equations (cf.~Section~\ref{sec:pade_approximation}).

Using a linear-algebraic argument, we derive an upper bound on the maximum number of errors which can yield a unique solution of the linear problem (cf.~Section~\ref{sec:decoding_radius}).
This decoding radius improves upon the previous best, \cite{kampf2014bounds}, at all rates.
In Section~\ref{sec:numerical_results}, we present simulation results for various code and decoder parameters which indicate that the new algorithm achieves the maximal decoding radius with high probability.
The complexity of solving the linear problem is shown to be sub-quadratic in the code length in Section~\ref{sec:complexity}.
Finally, we compare the decoding radii of RS, interleaved RS and interleaved $1$-H codes for the same overall field size and length in Section~\ref{sec:comparison}.

In the special case $\IntDegree=1$, we obtain an $1$-H analogue of the improved PD for RS codes \cite{nielsen2016power}.
This improves the decoding radius of the $1$-H PD decoder of \cite{nielsen2015sub} at a similar cost, sub-quadratic in the code length, and similar to the best known cost of the $1$-H GS algorithm \cite{nielsen2015sub}.
Simulation results suggest that the decoder has a similar failure probability as the GS algorithm for the same parameters when the decoding radius is beyond the guaranteed radius of the GS algorithm (cf.~Section~\ref{sec:numerical_results}).

The decoder is described for codes of full length $n = q^3$; the approach works for any $n < q^3$, but to obtain the good complexities, certain restrictions to how the evaluation points are chosen should be kept.
For notational convenience, we restrict ourselves to homogeneous interleaved $1$-H codes, i.e., where the constituent codes have the same rate. The generalization to inhomogeneous codes is straightforward.

The results of this article were partly presented at the International Workshop on Coding and Cryptography, Saint-Petersburg, Russia, 2017, where we only considered the case $\IntDegree=1$ \cite{puchinger2017improved}.
While writing this extension, we discovered some slightly improved key equations, which are presented here.
In the previous paper we sought $\Lambda^s$ where $\Lambda$ is a usual notion of error-locator; now we instead define  and seek $\Lambda_s$, which is an ``error-locator of multiplicity $s$''.

\section{Preliminaries}

Let $q$ be a prime power. We follow the notation of \cite{nielsen2015sub}.
The \emph{Hermitian curve} $\H/\Fqtwo$ is the smooth projective plane curve defined by the affine equation $Y^q+Y=X^{q+1}$.
The curve $\H(\Fqtwo)$ has genus $g = \tfrac{1}{2}q(q-1)$ and $q^3+1$ many $\Fqtwo$-rational points $\Pset = \{P_1,\dots,P_{q^3},\Pinf\}$, where $\Pinf$ denotes the point at infinity.
We define $\Rback := \cup_{\mH\geq0} \L(\mH\Pinf) = \Fqtwo[X,Y]/(Y^q+Y-X^{q+1})$, which has an $\Fqtwo$-basis of the form $\{ X^iY^j : 0 \leq i, 0 \leq j < q\}$. The order function $\degH : \Rback \to \ZZ_{\geq 0} \cup \{-\infty\}, f \mapsto -v_\Pinf(f)$ is defined by the valuation $v_\Pinf$ at $\Pinf$. As a result, we have $\degH(X^iY^j) = i q + j(q+1)$.
We will think often operate with elements of $\Rback$ as bivariate polynomials in $X$ and $Y$, represented as $\Fqtwo$-linear  combinations of the aforementioned basis.
In this paper, when we say ``degree'' of an element in $\Rback$, we mean its $\degH$.
A non-zero element of $\Rback$ is called monic if its monomial of largest $\degH$ has coefficient~$1$.

Let $n=q^3$ and $\mH \in \NN$ with $2(g-1) < \mH <n$. The \emph{one-point Hermitian code} of length $n$ and parameter $\mH$ over $\Fqtwo$ is defined by
\begin{align*}
\CHerm(n,\mH) = \left\{ \left( f(P_1) , \dots , f(P_n) \right) : f \in \L(\mH\Pinf) \right\}.
\end{align*}
The dimension of $\CHerm$ is given by $k=\mH-g+1$ and the minimum distance $d$ is lower-bounded by the \emph{designed minimum distance} $\ddesigned := n-\mH$.

The \emph{(homogeneous) $\IntDegree$-interleaved one-point Hermitian code} of length $n$ and parameter $\mH$ over $\Fqtwo$ is the direct sum of $\IntDegree$ one-point Hermitian codes $\CHerm(n,\mH)$, i.e.,
\begin{align*}
\CHerm(n,\mH;\IntDegree) = \left\{ \begin{bmatrix}
\c_1 \\
\vdots \\
\c_\IntDegree
\end{bmatrix} \in \Fqtwo^{\IntDegree \times n} \, : \,  \c_i \in \CHerm(n,\mH) \right\}.
\end{align*}
As a metric for errors, we consider \emph{burst errors}: If $\r=\c+\e \in \Fqtwo^{\IntDegree \times n}$ is received for a codeword $\c \in \CHerm(n,\mH;\IntDegree)$, then the \emph{error positions} $\Eset \subseteq \{1,\dots,n\}$ are given by the non-zero columns of $\e$, i.e.,
\begin{align*}
\Eset := \textstyle\bigcup_{j=1}^{\IntDegree} \left\{ i \, : \, e_{j,i} \neq 0 \right\}.
\end{align*}

For a vector $\i = [i_1,\ldots, i_m] \in \ZZ_{\geq 0}^\IntDegree$, we define its \emph{size} as $|\i| := \sum_\mu i_\mu$.
We denote by $\smaller$ the product partial order on $\ZZ_{\geq 0}^\IntDegree$, i.e.~$\i \smaller \j$ if $i_\mu \leq j_\mu$ for all $\mu$.
The number of vectors $\ve \in \ZZ_{\geq 0}^m$ of size $\len{\i} = \mu$ is given by $\tbinom{\IntDegree+\mu-1}{\mu}$.
We use the following relations, which follow from properties of the binomial coefficient.
\begin{lemma}\label{lem:binomial_sums}
Let $m,t \in \ZZ_{>0}$. Then,
\begin{align*}
\textstyle\sum_{\mu=0}^{t} \tbinom{m+\mu-1}{\mu} = \tbinom{m+t}{m}, \text{ and }
\textstyle\sum_{\mu=0}^{t-1} \mu \tbinom{m+\mu-1}{\mu} = t \tbinom{m+t-1}{m+1}.
\end{align*}
\end{lemma}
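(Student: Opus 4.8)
The plan is to derive both identities from Pascal's rule $\binom{a}{b}+\binom{a}{b+1}=\binom{a+1}{b+1}$ together with the hockey-stick identity $\sum_{\mu=0}^{N}\binom{r+\mu}{r}=\binom{r+N+1}{r+1}$, the latter itself being a one-line induction on $N$. In each case the idea is simply to put the summand into the shape $\binom{r+\mu}{r}$ and then collapse the sum.

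For the first identity I would rewrite each term via the symmetry $\binom{m+\mu-1}{\mu}=\binom{m+\mu-1}{m-1}$, so that the left-hand side becomes $\sum_{\mu=0}^{t}\binom{(m-1)+\mu}{m-1}$; the hockey-stick identity with $r=m-1$ and $N=t$ then yields $\binom{m+t}{m}$. Equivalently, one can apply Pascal's rule in the telescoping form $\binom{m+\mu-1}{\mu}=\binom{m+\mu}{\mu}-\binom{m+\mu-1}{\mu-1}$ and sum, the only surviving term being $\binom{m+t}{t}=\binom{m+t}{m}$, with the usual convention that a binomial coefficient with negative lower argument is $0$.

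For the second identity the extra ingredient is the absorption relation $\mu\binom{m+\mu-1}{\mu}=m\binom{m+\mu-1}{m}$, which is immediate from the factorial expression and which also makes the $\mu=0$ term vanish. Pulling the constant $m$ out of the sum and reindexing by $j=\mu-1$ turns the left-hand side into $m\sum_{j=0}^{t-2}\binom{m+j}{m}$, and a final application of the hockey-stick identity with $r=m$ and $N=t-2$ evaluates this to $m\binom{m+t-1}{m+1}$, i.e.\ the stated closed form with the multiplier $m$ (which is the interleaving parameter $\IntDegree$ in the later uses of the lemma).

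I do not expect any genuine obstacle: these are standard binomial identities, and the only care required is index bookkeeping — keeping track of the offset between the summation variable $\mu$ and the binomial's arguments so that the hockey-stick identity is applied with the correct parameters, and handling the $\mu=0$ boundary term in the weighted sum. If one prefers a self-contained argument that does not invoke the hockey-stick identity, both parts follow equally well by a short induction on $t$ using Pascal's rule directly.
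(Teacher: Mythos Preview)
Your argument is correct. The paper does not actually give a proof of this lemma; it merely states that both identities ``follow from properties of the binomial coefficient,'' so your hockey-stick plus absorption derivation is exactly the kind of verification the authors intended, and there is nothing substantively different to compare.

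One point deserves to be stated more forcefully rather than left as a parenthetical remark: the second identity as printed in the lemma contains a typo. As your computation shows, the right-hand side is $m\binom{m+t-1}{m+1}$, not $t\binom{m+t-1}{m+1}$; a quick check with, say, $m=2$, $t=3$ gives a left-hand side of $8$ and $t\binom{m+t-1}{m+1}=12$ but $m\binom{m+t-1}{m+1}=8$. The paper's own macro for this quantity is $\IntDegree\binom{\IntDegree+s-1}{\IntDegree+1}$, i.e.\ the $m$-version with $m=\IntDegree$, and that is the form used in all later computations. So your observation that the multiplier is the interleaving parameter $\IntDegree$ is not a side comment but the actual correction; you should say so explicitly.
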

Note that the Lemma~\ref{lem:binomial_sums} means e.g.
\begin{equation*}
  \sum_{\i \in \ZZ^h_{\geq 0}, \len{\i} < t} \len{\i} = t \tbinom{h+t-1}{h+1} \ .
\end{equation*}
We also introduce the following notational short-hands:
\begin{definition}
For $\a \in \Rback^\IntDegree$, and $\i,\j \in \ZZ_{\geq 0}^\IntDegree$, we define
\begin{align*}
\a^\i := \textstyle\prod_{\mu=1}^{\IntDegree} a_\mu^{i_\mu}, \quad
\tbinom{\j}{\i} := \textstyle\prod_{\mu=1}^{m} \tbinom{j_\mu}{i_\mu}.
\end{align*}
\end{definition}
By extending the binomial theorem to this notation, we obtain the following lemma.
\begin{lemma}\label{lem:generalized_binomial}
Let $\a,\b \in \Rback^\IntDegree$, and $\j \in \ZZ_{\geq 0}^m$. Then,
\begin{align*}
(\a+\b)^\j = \textstyle\sum_{\i \smaller \j} \tbinom{\j}{\i} \a^\i \b^{\j-\i}.
\end{align*}
\end{lemma}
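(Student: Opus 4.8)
The plan is to reduce the multivariate statement to $\IntDegree$ applications of the ordinary (univariate) binomial theorem, exploiting that $\Rback$ is a commutative ring. First I would unfold the power notation: by definition $(\a+\b)^\j = \prod_{\mu=1}^{\IntDegree} (a_\mu+b_\mu)^{j_\mu}$. Since $\Rback = \Fqtwo[X,Y]/(Y^q+Y-X^{q+1})$ is a commutative ring, the usual binomial theorem applies coordinatewise, giving $(a_\mu+b_\mu)^{j_\mu} = \sum_{i_\mu=0}^{j_\mu} \tbinom{j_\mu}{i_\mu}\, a_\mu^{i_\mu}\, b_\mu^{j_\mu-i_\mu}$ for each $\mu$.

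Next I would substitute these expansions into the product and distribute. Expanding a product of $\IntDegree$ finite sums yields a single sum indexed by all tuples $\i=(i_1,\dots,i_\IntDegree)$ with $0\le i_\mu\le j_\mu$ for every $\mu$, i.e.\ exactly by the set $\{\i : \i\smaller\j\}$. Each resulting summand is $\prod_{\mu=1}^{\IntDegree}\tbinom{j_\mu}{i_\mu}\,a_\mu^{i_\mu}\,b_\mu^{j_\mu-i_\mu}$, and using commutativity once more I would regroup separately the $\IntDegree$ binomial-coefficient factors, the $\IntDegree$ factors $a_\mu^{i_\mu}$, and the $\IntDegree$ factors $b_\mu^{j_\mu-i_\mu}$. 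By the definitions of $\tbinom{\j}{\i}$, $\a^\i$, and $\a^{\j-\i}$ (noting that $(\j-\i)_\mu = j_\mu-i_\mu\ge 0$ precisely because $\i\smaller\j$, so $\b^{\j-\i}$ is well defined), this regrouped summand equals $\tbinom{\j}{\i}\,\a^\i\,\b^{\j-\i}$, which is the claimed identity. The same argument can alternatively be phrased as an induction on $\IntDegree$, with the scalar binomial theorem as the base case and the inductive step splitting off the last coordinate; the two routes are equivalent.

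There is essentially no hard step here: once the product of sums has been expanded, the statement is pure notational bookkeeping. The only points worth a remark are that $\Rback$ is commutative — so the scalar binomial theorem is valid and the factors of each monomial may be permuted — and that the index set obtained from expanding the product is genuinely $\{\i : \i\smaller\j\}$ and nothing larger or smaller; both are immediate from the definitions.
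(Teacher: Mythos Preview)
Your proposal is correct and matches the paper's approach: the paper does not give a detailed proof but simply remarks that the lemma follows ``by extending the binomial theorem to this notation,'' which is exactly what you do by applying the univariate binomial theorem in each coordinate and regrouping.
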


For computational complexities, we use the soft-$O$ notation $O^\sim$, which omits log factors.

\section{System of Key Equations}
\label{sec:key_equations}

In this section, we derive the system of key equations that we need for decoding, using the same trick as \cite{puchinger2017decoding} for interleaved Reed--Solomon codes.
We use the description of power decoding for one-point Hermitian codes as in \cite{nielsen2015sub}.
Suppose that the received word is $\r = \c + \e \in \Fqtwo^{\IntDegree \times n}$, consisting of an error $\e$ with corresponding (burst) error positions $\Eset$ and a codeword $\c \in \CHerm(n,\mH;\IntDegree)$, which is obtained from the \emph{message polynomials} $\f = [f_1,\dots,f_\IntDegree] \in \L(m\Pinf)^\IntDegree$.

In the following sections we show how to retrieve the message polynomials $\f$ from the received word $\r$ if the \emph{number of errors} $|\Eset|$ does not exceed a certain decoding radius, which depends on the parameters of the decoding algorithm.
Similar to \cite{nielsen2015sub}, we define the following polynomials.

\begin{definition}
Let $s \in \NN$. The \emph{error locator polynomial $\Lambda_s$ of multiplicity $s$} is the element in $\L\left(-\sum_{i \in \Eset} s P_i + \infty \Pinf\right)$ of minimal degree that is monic.
\end{definition}

\begin{theorem}
The error locator polynomial of multiplicity $s$ is unique and has degree
\begin{align*}
s |\Eset| \leq \degH \Lambda_s \leq s |\Eset|+g.
\end{align*}
\end{theorem}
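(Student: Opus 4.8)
The plan is to work inside the Riemann--Roch space $\L\left(-\sum_{i \in \Eset} s P_i + \infty\Pinf\right) = \bigcup_{\mH \geq 0} \L\left(\mH\Pinf - \sum_{i \in \Eset} s P_i\right)$ and to control the dimensions of the graded pieces. First I would observe that the existence of a nonzero monic element of minimal $\degH$ is clear once we know the union is nonzero and the $\degH$ values are bounded below; existence of \emph{some} nonzero element follows because $\L(\mH\Pinf - \sum_{i \in \Eset}sP_i)$ is nonzero for $\mH$ large enough (its degree as a divisor grows without bound), and among all nonzero elements the set of attained $\degH$-values is a subset of $\ZZ_{\geq 0}$, hence has a minimum; scaling makes a minimizer monic.

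For the degree bounds, the key tool is Riemann--Roch. Write $D_\Eset = \sum_{i \in \Eset} P_i$, so $\deg D_\Eset = |\Eset|$. A function $\varphi \in \Rback$ lies in $\L(\mH\Pinf - sD_\Eset)$ exactly when $\degH\varphi \leq \mH$ and $\varphi$ vanishes to order $\geq s$ at each $P_i$, $i \in \Eset$. By Riemann--Roch, for $\mH > 2g-2 + s|\Eset|$ we have $\dim \L(\mH\Pinf - sD_\Eset) = \mH - s|\Eset| - g + 1$, and in particular this dimension is positive as soon as $\mH \geq s|\Eset| + g$; hence there is a nonzero element of $\degH \leq s|\Eset| + g$, giving $\degH\Lambda_s \leq s|\Eset| + g$. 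For the lower bound, note that $\Lambda_s$ is a nonzero function with a zero of order $\geq s$ at each of the $|\Eset|$ distinct rational points $P_i$, so its divisor of zeros has degree $\geq s|\Eset|$; since $\Lambda_s \in \Rback$ has its only pole at $\Pinf$ with pole order $\degH\Lambda_s$, comparing degrees of the divisor of zeros and the divisor of poles (which are equal, both being $\deg(\mathrm{div}\,\Lambda_s) + \text{pole order}$, i.e. the principal divisor has degree $0$) forces $\degH\Lambda_s \geq s|\Eset|$.

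For uniqueness, I would argue that two monic elements $\Lambda_s, \Lambda_s'$ of the same minimal degree $\delta$ would have $\Lambda_s - \Lambda_s'$ again in $\L(\delta\Pinf - sD_\Eset)$ but with strictly smaller $\degH$ (the leading terms cancel); by minimality of $\delta$ this difference must be $0$. This is the same standard argument as for the classical error locator, and it is short. The one genuine subtlety — and the step I expect to require the most care — is making precise the claim that membership in $\L(\mH\Pinf - sD_\Eset)$ corresponds to vanishing of multiplicity $s$ at the $P_i$ \emph{and} that $\dim \L(\mH\Pinf - sD_\Eset)$ jumps to be positive precisely by the time $\mH$ reaches $s|\Eset| + g$ rather than later; this is exactly where the genus $g$ enters the upper bound, and it needs the Riemann inequality $\dim \L(G) \geq \deg G - g + 1$ applied to $G = \mH\Pinf - sD_\Eset$ with $\deg G = \mH - s|\Eset| \geq g$. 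Once that is in hand, the whole statement follows, and it mirrors the $g=0$ (Reed--Solomon) case where $\degH\Lambda_s = s|\Eset|$ exactly.
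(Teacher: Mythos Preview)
Your proof is correct and follows essentially the same approach as the paper: uniqueness via the difference argument, the lower bound via the degree-zero property of principal divisors, and the upper bound via a dimension count. The only cosmetic difference is that the paper phrases the upper bound as ``$s|\Eset|$ homogeneous linear conditions versus more than $s|\Eset|$ coefficients,'' whereas you apply the Riemann inequality directly to the divisor $(s|\Eset|+g)\Pinf - sD_\Eset$; these are the same argument in different clothing.
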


\begin{proof}
The proof is similar to \cite[Lemma~23]{nielsen2015sub}.
Uniqueness is clear since if there were two such polynomials, their difference would also be in $\L\left(-\sum_{i \in \Eset} s P_i + \infty \Pinf\right)$, but of smaller $\degH$.
Being in $\L\left(-\sum_{i \in \Eset} s P_i + \infty \Pinf\right)$ specifies $s |\Eset|$ homogeneous linear equations in the coefficients of $\Lambda_s$, since for any $i \in \Eset$, we can expand $\Lambda_s$ into a power series $\sum_{j \geq s} \gamma_{i,j} \phi_i^j$ for a local parameter $\phi_i$ of $P_i$ (e.g., take $\phi_i = X-\alpha_i$ if $P_i = (\alpha_i,\beta_i)$).
By requiring $\degH \Lambda_s \leq s |\Eset|+g$, we have more variables than equations, so there is a non-zero $\Lambda_s$ of the sought form with degree at most $s |\Eset|+g$.
The lower bound works exactly as in \cite[Lemma~23]{nielsen2015sub}.
\end{proof}

\begin{lemma}
\label{lem:IP}
For each $i=1,\dots,\IntDegree$, there is a polynomial $R_i \in \Rback$ with $\degH(R_i)<n+2g$ that satisfies $R(P_j) = r_{i,j}$ for all $P_j \in \Pstar$.
Each $R_i$ can be computed in $O^\sim(n)$ operations over $\Fqtwo$.
\end{lemma}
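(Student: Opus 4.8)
The plan is to read this as a plain interpolation problem in the Hermitian function field---find a function of bounded pole order at $\Pinf$ taking the prescribed values at the $n$ affine rational points---settling existence by Riemann--Roch and obtaining the quasi-linear running time from the fibred structure of $\HFqtwo$ over the $X$-line. For existence, set $D := \sum_{P_j \in \Pstar} P_j$, a divisor of degree $n$, and consider the evaluation map $\mathrm{ev}_D\colon \L\!\big((n+2g-1)\Pinf\big) \to \Fqtwo^{\,n}$, $f \mapsto \big(f(P_j)\big)_{P_j\in\Pstar}$. Its kernel is $\L\!\big((n+2g-1)\Pinf - D\big)$, attached to a divisor of degree $2g-1 > 2g-2$, so Riemann--Roch makes it $g$-dimensional; the domain has dimension $(n+2g-1)-g+1 = n+g$, so $\mathrm{ev}_D$ is surjective. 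Any preimage $R_i$ of $(r_{i,1},\dots,r_{i,n})$ lies in $\L\!\big((n+2g-1)\Pinf\big)$, i.e.\ has $\degH(R_i) \le n+2g-1 < n+2g$; the explicit construction below re-proves this while also meeting the complexity claim.

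For the computation I would follow \cite{nielsen2015sub}. The key fact is that the $X$-coordinate of every affine point lies in $\Fqtwo$, each value $\alpha\in\Fqtwo$ being attained by exactly $q$ points $(\alpha,\beta)$, namely those with $\beta^q+\beta=\alpha^{q+1}$; for each $\alpha$ these $q$ values of $\beta$ form a coset of $\{y\in\Fqtwo : y^q+y=0\}$, so a single precomputation makes them available for every fibre. Seeking $R_i = \sum_{j=0}^{q-1} g_j(X)\,Y^j$ with $\deg g_j<q^2$, I would first, for each of the $q^2$ values $\alpha$, recover the scalars $g_0(\alpha),\dots,g_{q-1}(\alpha)$ as the coefficients of the unique polynomial in $Y$ of degree $<q$ passing through the $q$ received entries on the fibre over $\alpha$---a fast univariate interpolation costing $\Osoft(q)$ per fibre, hence $\Osoft(q^3)=\Osoft(n)$ in total---and then, for each $j$, recover $g_j$ of degree $<q^2$ from the $q^2$ evaluations just produced, at $\Osoft(q^2)$ each, again $\Osoft(q^3)=\Osoft(n)$ in total (the only extra bookkeeping, bucketing the received entries by $X$-coordinate, is $\Osoft(n)$). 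By construction $R_i(P_j)=r_{i,j}$ for every $P_j\in\Pstar$.

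It then remains to bound $\degH(R_i)$ for the constructed $R_i$: its monomials lie among the $X^aY^j$ with $a<q^2$ and $j<q$, and $\degH(X^aY^j)=aq+j(q+1)$ is largest at $(q^2-1)q+(q-1)(q+1)=q^3+q^2-q-1$; since $2g=q^2-q$ and $n=q^3$ this is exactly $n+2g-1<n+2g$. This final estimate is the one place needing care: the crude bounds $\deg_X<q^2$, $\deg_Y<q$ would at first glance permit $\degH$ as large as $q^3+q^2-1$, and it is precisely the value $2g=q^2-q$ of the genus that makes confining $\deg_X$ below $q^2$---which is exactly what reduction modulo $X^{q^2}-X$ (a function vanishing at all affine points) achieves---place $R_i$ inside $\L((n+2g-1)\Pinf)$. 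Otherwise there is no genuine obstacle: surjectivity of $\mathrm{ev}_D$ is bare Riemann--Roch, and the $\Osoft(n)$ bound follows directly from the Hermitian fibration together with standard fast univariate interpolation.
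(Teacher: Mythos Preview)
Your proof is correct and essentially reconstructs \cite[Lemma~6]{nielsen2015sub}, which is all the paper's own proof invokes (applied to each row of $\r$). The only wrinkle is your closing aside: the bounds $\deg_X<q^2$, $\deg_Y<q$ already yield $\degH \le (q^2-1)q+(q-1)(q+1)=q^3+q^2-q-1=n+2g-1$ (as you yourself computed two lines earlier), not $q^3+q^2-1$, so no additional care is actually needed there.
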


\begin{proof}
Apply \cite[Lemma~6]{nielsen2015sub} to each row of the received word.
\end{proof}

In the following, let $\R = [R_1,\dots,R_\IntDegree] \in \Rback^\IntDegree$ be as in Lemma~\ref{lem:IP} and $G \in \Rback$ be defined as
\begin{align*}
G = \textstyle\prod_{\alpha \in \Fqtwo} (X-\alpha) = X^{q^2}-X.
\end{align*}

\begin{lemma}\label{lem:omega_s_i}
For each $\i \in \ZZ_{\geq 0}$ with $\len{\i} \leq s$, there is a unique $\Omegasi \in \Rback$ of degree $\degH \Omegasi \leq \degH \Lambda_s + \len{\i} (2g-1)$ such that
\begin{align*}
\Lambda_s (\f-\R)^\i = G^{\len{\i}} \Omegasi.
\end{align*}
\end{lemma}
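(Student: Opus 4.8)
The plan is to reduce the statement to a local divisibility check at each error position, exactly as for the analogous lemma in the Reed--Solomon setting. First I would expand $(\f-\R)^\i = \prod_{\mu=1}^\IntDegree (f_\mu - R_\mu)^{i_\mu}$ and record that each factor $f_\mu - R_\mu$ vanishes at every $P_j \notin \Eset$: indeed $f_\mu(P_j) = c_{\mu,j} = r_{\mu,j} = R_\mu(P_j)$ there by Lemma~\ref{lem:IP} and the definition of the code. Hence $(\f-\R)^\i$ vanishes to order at least $\len{\i}$ at each $P_j \notin \Eset$, because it is a product of $\len{\i}$ factors each vanishing at $P_j$. On the other hand $\Lambda_s$ vanishes to order at least $s \geq \len{\i}$ at each $P_j \in \Eset$ by definition of the error locator of multiplicity $s$. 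Since $\Fqtwo$ has $q^2$ elements, $G = \prod_{\alpha\in\Fqtwo}(X-\alpha)$ vanishes to order exactly $1$ at each of the $q^3$ affine rational points $P_1,\dots,P_n$ (the fibre of $X$ over each $\alpha\in\Fqtwo$ consists of $q$ points, at each of which $X-\alpha$ is a local parameter, so $v_{P_j}(G)=1$), and $G$ has no other zeros among the $P_j$; its only pole is at $\Pinf$, with $\degH G = q^2\cdot q = q^3$ wait — more simply $\degH G = \degH(X^{q^2}) = q^2 \cdot q = q^3$, consistent with having $q^3$ affine zeros.

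Next I would assemble the divisor inequality. Set $N := \Lambda_s (\f-\R)^\i \in \Rback$, which is a regular function on the affine curve, i.e.\ lies in $\L(\infty\Pinf)$. From the two vanishing statements, for every affine rational point $P_j$ we have $v_{P_j}(N) \geq \len{\i}$: if $P_j\in\Eset$ this comes from $v_{P_j}(\Lambda_s)\geq s\geq\len{\i}$, and if $P_j\notin\Eset$ it comes from $v_{P_j}((\f-\R)^\i)\geq\len{\i}$. Therefore $N / G^{\len{\i}}$ has non-negative valuation at every affine rational point. I would then argue it also has no poles at the non-rational affine points of $\H$: $G$ is a unit in the local ring at every closed point of $\A$ not lying over some $\alpha\in\Fqtwo$, and at the (rational) points over $\Fqtwo$ the divisibility was just checked; since $N\in\Rback$ has no finite poles at all, $N/G^{\len{\i}}\in\Rback$. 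Define $\Omegasi := N / G^{\len{\i}}$; this is the required element, and the factorization $\Lambda_s(\f-\R)^\i = G^{\len{\i}}\Omegasi$ holds by construction. Uniqueness is immediate: $G$ is a non-zero-divisor in the integral domain $\Rback$, so $\Omegasi$ is determined.

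Finally I would bound $\degH\Omegasi = -v_\Pinf(\Omegasi)$. Since $v_\Pinf$ is a valuation, $\degH\Omegasi = \degH N - \len{\i}\,\degH G = \degH\Lambda_s + \degH\big((\f-\R)^\i\big) - \len{\i}\,q^3$. For the middle term, $\degH(f_\mu - R_\mu) \leq \max\{\degH f_\mu, \degH R_\mu\}$; we have $\degH f_\mu \leq \mH < n = q^3$ and, by Lemma~\ref{lem:IP}, $\degH R_\mu < n + 2g$, so $\degH(f_\mu - R_\mu) < n + 2g$. Hence $\degH\big((\f-\R)^\i\big) \leq \len{\i}(n + 2g - 1) = \len{\i}(q^3 + 2g - 1)$, using that the $\degH$ of a product is the sum of the $\degH$'s (another property of the valuation) and that strict inequality $<n+2g$ over integers means $\leq n+2g-1$. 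Subtracting $\len{\i}q^3$ gives $\degH\Omegasi \leq \degH\Lambda_s + \len{\i}(2g-1)$, as claimed.

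The main obstacle is the global-to-local bookkeeping: one must be careful that $G^{\len{\i}}$ really divides $N$ \emph{inside $\Rback$}, not merely in the function field. The key points are (i) $\Rback$ is the ring of regular functions on a smooth affine curve, hence integrally closed, so a function regular at all closed points of $\A$ lies in $\Rback$; and (ii) the only closed points where $G$ could contribute to the pole divisor of $N/G^{\len{\i}}$ are those over $\alpha\in\Fqtwo$, all of which are the rational points $P_1,\dots,P_n$ where we have verified the order-$\len{\i}$ vanishing of $N$. Getting (ii) exactly right — in particular that each such $P_j$ is a \emph{simple} zero of $G$, so that $\len{\i}$ copies of $G$ consume exactly $\len{\i}$ of the available vanishing order and no more — is where the argument must be stated precisely.
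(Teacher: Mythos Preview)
Your proof is correct and follows essentially the same route as the paper: both arguments split the valuation check at each $P_j$ according to whether $j\in\Eset$ (where $\Lambda_s$ supplies order $\geq s\geq\len{\i}$) or $j\notin\Eset$ (where $(\f-\R)^\i$ supplies order $\geq\len{\i}$), conclude that $\Lambda_s(\f-\R)^\i\in\L\big(-\len{\i}\sum_j P_j+\infty\Pinf\big)$, and then read off the degree bound from $\degH G=n$ and $\degH R_\mu<n+2g$. The only difference is packaging: where you spell out explicitly that $\Rback$ is integrally closed and that $G$ is a unit away from the rational affine points, the paper simply invokes \cite[Lemma~3]{nielsen2015sub} for the divisibility step.
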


\begin{proof}
Consider $v_{P_j}(\Lambda_s (\f-\R)^\i)$ for $j = 1,\ldots,n$: if $j \in \Eset$ then $v_{P_j}(\Lambda_s (\f-\R)^\i) = v_{P_j}(\Lambda_s) \geq s \geq \len{\i}$.
If $j \notin \Eset$ then $v_{P_j}(\Lambda_s (\f-\R)^\i) \geq v_{P_j}((\f-\R)^\i) \geq \len{\i}$.
We conclude
\begin{align*}
\Lambda_s (\f-\R)^\i \in \L\Big(-\len{\i} \sum_{j=1}^{n} P_j + \infty \Pinf\Big).
\end{align*}
Since the divisor in that $\L$-space is exactly $\mathrm{div}(G^{\len{\i}}) + \infty \Pinf$, then $\Lambda_s (\f-\R)^\i$ must be divisible by $G^{\len{\i}}$ (see e.g., \cite[Lemma~3]{nielsen2015sub}) with quotient in $\Rback$.
The degree is given by taking $\degH$ on both sides and using $\degH(R_i) < n+2g-1$.
\end{proof}

The following theorem states the system of key equations that we will use for decoding in the next sections.
Note that the formulation is similar to its interleaved Reed--Solomon analog \cite{puchinger2017decoding}, with the difference that all involved polynomials are elements of the ring $\Rback$.

\begin{theorem}[System of Key Equations]\label{thm:key_equations}
Let $\ell,s \in \ZZ_{>0}$ be such that $s \leq \ell$ and $\Lambda_s$, $\f$, $\R$, $G$, and $\Omega_{s,\i}$ as above. Then, for all $\j \in \ZZ_{\geq 0}^\IntDegree$ of size $1 \leq \len{\j} \leq \ell$, we have
\begin{align}
\Lambda_s \f^\j
&= \sum\limits_{\i \smaller \j} \Omegasi \left[ \binom{\j}{\i} \R^{\j-\i} G^{\len{\i}}\right], && 1 \leq \len{\j}<s \label{eq:key_eq_equality} \\
\Lambda_s \f^\j &\equiv \sum\limits_{\substack{\i \smaller \j \\ \len{\i} < s}}
\Omega_{s,\i}
\left[ \binom{\j}{\i} \R^{\j-\i} G^{\len{\i}}\right]
  \mod G^s, &&s \leq \len{\j} \leq \ell, \label{eq:key_eq_congruence}
\end{align}
as congruences over $\Rback$.
\end{theorem}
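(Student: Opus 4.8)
The plan is to derive both identities from the single multiplicative relation in Lemma~\ref{lem:omega_s_i}, namely $\Lambda_s(\f-\R)^\i = G^{\len\i}\Omega_{s,\i}$, by expanding $\f^\j = ((\f-\R)+\R)^\j$ via the generalized binomial theorem (Lemma~\ref{lem:generalized_binomial}). First I would write
\begin{align*}
\Lambda_s \f^\j = \Lambda_s\big((\f-\R)+\R\big)^\j = \sum_{\i\smaller\j}\tbinom{\j}{\i}\,\Lambda_s(\f-\R)^\i\,\R^{\j-\i},
\end{align*}
and then substitute $\Lambda_s(\f-\R)^\i = G^{\len\i}\Omega_{s,\i}$ in every term for which $\len\i\le s$, which is legitimate because Lemma~\ref{lem:omega_s_i} supplies $\Omega_{s,\i}$ exactly in that range. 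This immediately gives
\begin{align*}
\Lambda_s\f^\j = \sum_{\i\smaller\j}\Omega_{s,\i}\Big[\tbinom{\j}{\i}\R^{\j-\i}G^{\len\i}\Big],
\end{align*}
valid as an exact identity over $\Rback$ whenever every $\i\smaller\j$ satisfies $\len\i\le s$, i.e.\ whenever $\len\j\le s$; restricting further to $\len\j<s$ (as in the statement) is then automatic and yields \eqref{eq:key_eq_equality}.

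For the congruence \eqref{eq:key_eq_congruence}, the point is that when $\len\j\ge s$ the sum over $\i\smaller\j$ splits into those $\i$ with $\len\i<s$, for which $\Omega_{s,\i}$ is defined and the substitution applies, and those $\i$ with $\len\i\ge s$, for which $G^{\len\i}$ appears as a factor in $\Lambda_s(\f-\R)^\i$ even without introducing $\Omega_{s,\i}$ — indeed for such $\i$ the argument of Lemma~\ref{lem:omega_s_i} still shows $\Lambda_s(\f-\R)^\i\in\L(-\len\i\sum_j P_j+\infty\Pinf)\subseteq\L(-s\sum_j P_j+\infty\Pinf)$, hence $G^s\mid\Lambda_s(\f-\R)^\i$ in $\Rback$. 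So every $\i$ with $\len\i\ge s$ contributes $0$ modulo $G^s$, and reducing the full expansion modulo $G^s$ leaves exactly the sum over $\{\i\smaller\j:\len\i<s\}$, which is \eqref{eq:key_eq_congruence}. The hypothesis $s\le\ell$ only serves to guarantee that the range $s\le\len\j\le\ell$ is nonempty and consistent; it plays no role in the algebra.

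The only genuinely delicate point is the claim that divisibility of an element of $\Rback$ by $G^{\len\i}$ (resp.\ $G^s$) as \emph{divisors}, i.e.\ membership in the corresponding Riemann--Roch space, actually forces divisibility by $G^{\len\i}$ (resp.\ $G^s$) \emph{as elements of the ring} $\Rback$, with quotient again in $\Rback$; this is precisely the content cited from \cite[Lemma~3]{nielsen2015sub} and is the same fact already used in the proof of Lemma~\ref{lem:omega_s_i}, so I would simply invoke it. Beyond that, the proof is a direct computation: expand, substitute term-by-term, and reduce mod $G^s$. I expect no real obstacle — the main care is bookkeeping the index ranges ($\len\i\le s$ versus $\len\i<s$ versus $\len\i\ge s$) so that $\Omega_{s,\i}$ is only ever referenced where Lemma~\ref{lem:omega_s_i} guarantees its existence, and checking that the degree bound $\degH(R_i)<n+2g$ from Lemma~\ref{lem:IP} is never actually needed here (it is only relevant for the effective statements in later sections).
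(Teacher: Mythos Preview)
Your approach is correct and essentially identical to the paper's: expand $\Lambda_s\f^\j$ via the generalized binomial theorem, substitute Lemma~\ref{lem:omega_s_i} for the terms with $\len\i<s$, and observe that the terms with $\len\i\ge s$ vanish modulo $G^s$. One small slip: for $\len\i>s$ the argument of Lemma~\ref{lem:omega_s_i} does \emph{not} give $\Lambda_s(\f-\R)^\i\in\L(-\len\i\sum_jP_j+\infty\Pinf)$, since at error positions one only has $v_{P_j}(\Lambda_s)\ge s$, not $\ge\len\i$; you get membership in $\L(-s\sum_jP_j+\infty\Pinf)$ directly, which is all you actually use. The paper sidesteps this by writing $\i=\i'+\i''$ with $\len{\i'}=s$ and applying Lemma~\ref{lem:omega_s_i} to $\i'$, so that $\Lambda_s(\f-\R)^\i=G^s\Omega_{s,\i'}(\f-\R)^{\i''}$.
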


\begin{proof}
Using Lemma~\ref{lem:generalized_binomial}, we obtain
\begin{align}
\Lambda_s \f^\j = \Lambda_s \left(\R + \left(\f-\R\right)\right)^\j = \textstyle\sum_{\i \smaller \j} \tbinom{\j}{\i} \Lambda_s \left(\f-\R\right)^\i \R^{\j-\i}. \label{eq:key_eq_proof_eq1}
\end{align}
In all summands with $\len{\i} <s$, we can rewrite, using Lemma~\ref{lem:omega_s_i},
\begin{align}
\Lambda_s \left(\f-\R\right)^\i = G^\len{\i} \Omegasi. \label{eq:key_eq_proof_eq2}
\end{align}
If $\len{\i} \geq s$, we can write $\i = \i'+\i''$, for some $\i',\i'' \in \ZZ_{\geq 0}$ with $\len{\i'} = s$, and
\begin{align*}
\Lambda_s \left(\f-\R\right)^\i = \Lambda_s(\f-\R)^{\i'} (\f-\R)^{\i''} = G^s \Omega_{s,\i'} (\f-\R)^{\i''},
\end{align*}
so all those terms are divisible by $G^s$. For $\len{\j} < s$, all summands of \eqref{eq:key_eq_proof_eq1} have $\len{\i} \leq \len{\j} < s$ and are of the form \eqref{eq:key_eq_proof_eq2}. We therefore obtain \eqref{eq:key_eq_equality}. For $\len{\j} \geq s$, all summands of \eqref{eq:key_eq_proof_eq1} with $\len{\i}\geq s$ are divisible by $G^s$, so we get \eqref{eq:key_eq_congruence}.
\end{proof}

\section{Solving the System of Key Equations}
\label{sec:pade_approximation}

The key equations in Theorem~\ref{thm:key_equations} are non-linear relations between the unknown polynomials $\Lambda_s$, $\f$, and $\OmegaVec$.
We therefore relax them into---at the first glance much weaker---linear problem and hope that their solutions agree.
The resulting problem is a heavy generalisation of multi-sequence linear shift register synthesis \cite{feng_generalized_1989,nielsen_generalised_2013}, which is very related to simultaneous Hermite Pad\'e approximations \cite{beckermann_uniform_1994}.

\begin{problem}\label{prob:IH_decoding_SRP}
  Consider a code $\Code = \CHerm(n,\mH;\IntDegree)$ and a decoding instance with received word $\r = \c + \e \in \Fqtwo^{h \times n}$, where $\c \in \Code$ is unknown and is obtained from the unknown message polynomials $\f \in \L(m\Pinf)^\IntDegree$.
  Let $\R$ and $G$ be as in Section~\ref{sec:key_equations}.
  Given positive integers $s \leq \ell$, let
\begin{align*}
A_{\i,\j} = \tbinom{\j}{\i} \R^{\j-\i} G^\len{\i} \in \Rback
\end{align*}
for all $\i \in \Iset := \{\i \in \NN_0^\IntDegree : 0 \leq \len{\i} < s\}$ and $\j \in \Jset := \{\j \in \NN_0^\IntDegree : 1 \leq \len{\j} \leq \ell\}$. Find $\lambda_\i, \psi_\j \in \Rback$ for $\i \in \Iset$ and $\j \in \Jset$ with monic $\lambda_\0$, such that
\begin{align}
\psi_\j &= \sum\limits_{\i \in \Iset} \lambda_\i A_{\i,\j} &&\j \in \Jset \textrm{ and } \len{\j} < s, \label{eq:IH_decoding_SRP_equation} \\
\psi_\j &\equiv \sum\limits_{\i \in \Iset} \lambda_\i A_{\i,\j} \mod G^s &&\j \in \Jset \textrm{ and } \len{\j} \geq s, \label{eq:IH_decoding_SRP_congruence} \\
\degH \lambda_\0 &\geq \degH \lambda_\i-\len{\i}(2g-1) && \i \in \Iset, \label{eq:IH_decoding_SRP_deg_lambdai} \\
\degH \lambda_\0 &\geq \degH \psi_\j - \len{\j}\mH &&  \j \in \Jset . \label{eq:IH_decoding_SRP_deg_psii}
\end{align}
\end{problem}

\begin{definition}
Consider an instance of Problem~\ref{prob:IH_decoding_SRP}.
We say that a solution $(\lambda_\i)_{i \in \Iset},(\psi_\j)_{\j \in \Jset}$, has \emph{degree} $\tau \in \ZZ_{\geq 0}$ if $\degH \lambda_\0 = \tau$.
Furthermore, we call a solution \emph{minimal} if its degree is minimal among all solutions.
\end{definition}

Problem~\ref{prob:IH_decoding_SRP} is connected to the key equations through the following statement.

\begin{theorem}\label{thm:IH_decoding_SRP_Lambda_Psi_solution}
Consider an instance of Problem~\ref{prob:IH_decoding_SRP}. Then,
\begin{align*}
\lambda_\i &= \Lambda_\i := \Omegasi, &&\i \in \Iset, \\
\psi_\j &= \Psi_\j := \Lambda_s \f^\j, &&\j \in \Jset,
\end{align*}
is a solution to the problem of degree $\tau = \degH \Lambda_s$, where $s \cdot |\Eset| \leq \tau \leq s \cdot |\Eset|+g$.
\end{theorem}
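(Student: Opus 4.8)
The plan is to verify each of the four constraints of Problem~\ref{prob:IH_decoding_SRP} for the proposed candidate $(\lambda_\i,\psi_\j) = (\Omegasi, \Lambda_s\f^\j)$, drawing directly on the results already established in Section~\ref{sec:key_equations}. The two equational/congruential constraints \eqref{eq:IH_decoding_SRP_equation} and \eqref{eq:IH_decoding_SRP_congruence} are essentially a restatement of the System of Key Equations (Theorem~\ref{thm:key_equations}): substituting $A_{\i,\j} = \binom{\j}{\i}\R^{\j-\i}G^{\len{\i}}$ turns \eqref{eq:key_eq_equality} and \eqref{eq:key_eq_congruence} into exactly \eqref{eq:IH_decoding_SRP_equation} and \eqref{eq:IH_decoding_SRP_congruence}. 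First I would note that $\lambda_\0 = \Omega_{s,\0} = \Lambda_s$ (since $(\f-\R)^\0 = 1$ and $G^0 = 1$ in Lemma~\ref{lem:omega_s_i}), which is monic by definition of the error locator polynomial; this handles the monic requirement and simultaneously pins down the degree of the candidate solution as $\tau = \degH\Lambda_s$.

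Next I would check the two degree constraints. For \eqref{eq:IH_decoding_SRP_deg_lambdai}: Lemma~\ref{lem:omega_s_i} gives $\degH\Omegasi \leq \degH\Lambda_s + \len{\i}(2g-1)$, which rearranges to $\degH\Lambda_s \geq \degH\Omegasi - \len{\i}(2g-1)$, i.e.\ exactly $\degH\lambda_\0 \geq \degH\lambda_\i - \len{\i}(2g-1)$. For \eqref{eq:IH_decoding_SRP_deg_psii}, I need $\degH\Lambda_s \geq \degH(\Lambda_s\f^\j) - \len{\j}\mH$; since $\degH$ is a valuation-induced order function it is additive on products, so $\degH(\Lambda_s\f^\j) = \degH\Lambda_s + \sum_\mu j_\mu\degH f_\mu$, and because each $f_\mu \in \L(\mH\Pinf)$ we have $\degH f_\mu \leq \mH$, hence $\sum_\mu j_\mu \degH f_\mu \leq \len{\j}\mH$, giving the claim. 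A small edge case to mention: if some $f_\mu = 0$ then $\degH f_\mu = -\infty$ and the bound holds trivially; likewise if $\Lambda_s\f^\j = 0$.

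Finally, the degree bounds $s|\Eset| \leq \tau \leq s|\Eset| + g$ follow immediately from the Theorem on the error locator polynomial of multiplicity $s$ stated earlier in the excerpt, since $\tau = \degH\Lambda_s$. I would close by remarking that nothing in this argument is delicate—each of the six items (monic, two key-equation relations, two degree inequalities, the degree characterization) is a direct invocation of an already-proved statement—so the only thing to be careful about is bookkeeping: confirming that the index sets $\Iset$ and $\Jset$ in Problem~\ref{prob:IH_decoding_SRP} match the ranges of $\i$ and $\j$ appearing in Lemma~\ref{lem:omega_s_i} and Theorem~\ref{thm:key_equations} (in particular that $\Omegasi$ is defined for all $\i$ with $\len{\i}<s$, which it is since $\len{\i}<s$ implies $\len{\i}\leq s$), and that the congruence case correctly restricts the sum on the right-hand side to $\len{\i}<s$. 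There is no real obstacle here; the statement is a consolidation lemma whose purpose is to license the linear-algebraic analysis of Section~\ref{sec:decoding_radius}.
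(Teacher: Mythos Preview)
Your proposal is correct and follows essentially the same approach as the paper's proof: identify $\lambda_\0 = \Omega_{s,\0} = \Lambda_s$, invoke the key equations of Theorem~\ref{thm:key_equations} for \eqref{eq:IH_decoding_SRP_equation}--\eqref{eq:IH_decoding_SRP_congruence}, and use Lemma~\ref{lem:omega_s_i} together with $\degH f_\mu \leq \mH$ for the degree bounds. Your write-up is in fact more thorough than the paper's, which dispatches the whole thing in three lines; the only bookkeeping point you left implicit is that the index set mismatch between $\sum_{\i \in \Iset}$ in Problem~\ref{prob:IH_decoding_SRP} and $\sum_{\i \smaller \j}$ in Theorem~\ref{thm:key_equations} is harmless because $\tbinom{\j}{\i} = 0$ whenever $\i \not\smaller \j$.
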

\begin{proof}
  Note $\Omega_{s,\0} = \Lambda_s$.
  The equalities and congruences are now clear from the key equations.
  As for the degree restrictions, we have
\begin{align*}
\degH \Lambda_\i &\leq \degH \Lambda_\0 + \len{\i}(2g-1), \\
\degH \Psi_\j &= \degH(\Lambda_s) + \degH(\f^\j) \leq \degH \Lambda_\0 + \len{\j}\mH,
\end{align*}
which proves the claim.
\end{proof}

\begin{remark}
Most received words will satisfy $\degH R_i = n+2g-1$ for all $i=1,\dots,\IntDegree$.
In such a case, the solution of Problem~\ref{prob:IH_decoding_SRP} given in Theorem~\ref{thm:IH_decoding_SRP_Lambda_Psi_solution} fulfills all degree restrictions of the problem with equality.
These relative upper bounds on the degrees of $\lambda_\i$ and $\psi_\j$ are therefore the minimal choice among all such bounds for which Theorem~\ref{thm:IH_decoding_SRP_Lambda_Psi_solution} holds.
\end{remark}

Theorem~\ref{thm:IH_decoding_SRP_Lambda_Psi_solution} motivates a decoding strategy, which is outlined in Algorithm~\ref{alg:IH_improved_power}:
To every codeword $\c' \in \CHerm(n,\mH;\IntDegree)$ corresponds a solution to Problem~\ref{prob:IH_decoding_SRP} whose degree is roughly $s \cdot |\Eset'|$, where $|\Eset'|$ is the number of errors (i.e., non-zero columns) of $\r-\c'$.
Among those solutions, we want to find the one of smallest degree, i.e., the one for the closest codeword.
There will also be other solutions to Problem~\ref{prob:IH_decoding_SRP}, which do not correspond to codewords, but the idea is that in most cases, and when the number of errors is not too large, the minimal solution \emph{will} correspond to the closest codeword.

\printalgoIEEE{
\DontPrintSemicolon
\KwIn{Received word $\r \in \Fqm^{\IntDegree \times n}$ and positive integers $s \leq \ell$}
\KwOut{$\f \in \L(\mH \Pinf)^\IntDegree$ such that $\c_i = [f_i(P_1),\dots,f_i(P_n)]$ for all $i=1,\dots,\IntDegree$ is the codeword with a corresponding minimal $\degH \Lambda_s$; or ``decoding failure''.}
Compute $\R$ and $G$ as in Section~\ref{sec:key_equations} \;
$A_{\i,\j} \gets \tbinom{\j}{\i} \R^{\j-\i} G^{\len{\i}}$ for all $\i \smaller \j$ \;
$\lambda_\i,\psi_\j \gets$ Minimal solution to Problem~\ref{prob:IH_decoding_SRP} with input $s$, $\ell$, $A_{\i,\j}$, and $G$ \label{line:minimal_solution}\;
\If{$\lambda_\0$ divides all $\psi_{\u_i}$ over $\Rback$ for $i=1,\dots,\IntDegree$, where $\u_i$ is the $i\th$ unit vector}{
$\f \gets [\psi_{\u_1}/\lambda_\0,\dots,\psi_{\u_\IntDegree}/\lambda_\0]$ \;
$\Eset \gets $ Error set corresponding to $\e_i = \r_i-[f_i(\alpha_1),\dots,f_i(\alpha_n)]$ for $i=1,\dots,\IntDegree$ \;
\If{$\lambda_\0 \in \L\left(-\sum_{i \in \Eset} s P_i + \infty \Pinf\right)$ {\rm \textbf{and}} $s \cdot |\Eset| \leq \degH \lambda_\0 \leq s (|\Eset|+g)$}{
\Return{$\f$}
}
}
\Return{``decoding failure''}
\caption{Improved Power Decoder for $\IntDegree$-Interleaved 1-Point Hermitian Codes}
\label{alg:IH_improved_power}
}

In the cases for which this does not happen, the decoder will fail; we will return to this in Section~\ref{sec:decoding_radius}.
If the algorithm finds a solution that corresponds to a codeword, then we have $\lambda_\0 = \Lambda_s$ and $\psi_{\u_i} = \Lambda_s f_i$ for $i=1,\dots,\IntDegree$, where $\u_i = [0,\dots,1,\dots,0]$ is the $i\th$ unit vector.
Hence, we obtain the $i\th$ message polynomial $f_i$ by division of $\psi_{\u_i}$ by $\lambda_\0$.

Note that Algorithm~\ref{alg:IH_improved_power} does not exactly promise to find the closest codeword: it finds the codewords whose corresponding $\Lambda_s$ has minimal $\degH$.
When the number of errors is very small, we will often or always have $\degH \Lambda_s < s|\Eset| + g$; but in this case all other codewords are much farther away from $\r$.
On the other hand, when the number of errors is large, most error vectors will satisfy $\degH \Lambda_s = s|\Eset| + g$.
In both these cases Algorithm~\ref{alg:IH_improved_power} will find the closest codewords.
It seems reasonable to expect, however, that there exist some rare received words for which a farther codeword will have an associated $\Lambda_s$ of lower $\degH$ than the closest codeword.

We will see in Section~\ref{sec:complexity} that we can find a minimal solution of Problem~\ref{prob:IH_decoding_SRP} efficiently.

\section{Decoding Radius and Failure Behavior}
\label{sec:decoding_radius}

In this section, we derive an upper bound on the maximal degree of the error locator polynomial $\Lambda_s$ for which there can be a unique minimal solution of Problem~\ref{prob:IH_decoding_SRP}.
Since the degree of $\Lambda_s$ is related to the number of errors, this implies an estimate of the maximal decoding radius of our decoder.
We also briefly discuss in which cases the decoder fails below this bound.

\begin{lemma}\label{lem:delta_tau_homogeneous}
Let $\tau,\ell,s \in \NN$ such that $s \leq \ell$ and $\tau+\ell \mH < sn$.
All polynomials $\lambda_\i,\psi_\j \in \Rback$ for $\i \in \Iset$ and $\j \in \Jset$ that fulfill \eqref{eq:IH_decoding_SRP_equation}, \eqref{eq:IH_decoding_SRP_congruence}, and the absolute degree restrictions
\begin{align}
\degH \lambda_\i - \len{\i} (2g-1) &\leq \tau, \label{eq:IH_SRP_absolute_degree_1}\\
\degH \psi_\j - \len{\j} \mH &\leq \tau, \label{eq:IH_SRP_absolute_degree_2}
\end{align}
can be computed by a homogeneous linear system of equations over $\Fqtwo$ with at least
\begin{align*}
\delta(\tau) = (\tau+1) \CountLeqt{\ell} - n \left[ \WeightedLtt{s} +  s\CountLeqt{\ell} - s \CountLtt{s} \right] + \mH\WeightedLeqt{\ell} - g \CountLeqt{\ell}
\end{align*}
more variables than equations, whenever $\delta(\tau) \geq 0$.
If $\tau \geq 2g-1$, there are received words for which the difference is exactly $\delta(\tau)$.
\end{lemma}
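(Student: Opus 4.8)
The idea is straightforward: treat the coefficients of the $\lambda_\i$ and $\psi_\j$ (with respect to the standard monomial basis $\{X^aY^b : 0\le b < q\}$ of $\Rback$) as unknowns over $\Fqtwo$, and count how many of them the degree bounds \eqref{eq:IH_SRP_absolute_degree_1}--\eqref{eq:IH_SRP_absolute_degree_2} allow, then count how many $\Fqtwo$-linear constraints the relations \eqref{eq:IH_decoding_SRP_equation}--\eqref{eq:IH_decoding_SRP_congruence} impose, and subtract. First I would recall the basic dimension count: for $m \in \NN$ with $m \ge 2g-1$, the space $\L(m\Pinf)$ has dimension $m-g+1$ over $\Fqtwo$; more precisely $\dim \L(m\Pinf) = m+1-g$ once $m > 2g-2$, and in general $\dim\L(m\Pinf) = \#\{iq + j(q+1) \le m : 0\le j < q\}$, which for $m \ge 2g-1$ equals $m+1-g$. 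This is where the hypothesis $\tau \ge 2g-1$ (for the ``exactly $\delta(\tau)$'' claim) enters; for the lower bound ``at least $\delta(\tau)$'' one only needs that the true dimension is $\ge m+1-g$, which holds for all $m\ge 0$ after discarding negative contributions, and this is why the statement says \emph{at least}.

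Next I would tally the pieces. The number of unknowns: by \eqref{eq:IH_SRP_absolute_degree_1}, $\degH\lambda_\i \le \tau + \len{\i}(2g-1)$, contributing (at least) $\tau + \len\i(2g-1) + 1 - g$ coefficients when this is $\ge$ the relevant threshold; summing over $\i\in\Iset$ and using Lemma~\ref{lem:binomial_sums} to evaluate $\sum_{\len\i < s} 1 = \CountLtt{s}$-type sums and $\sum_{\len\i<s}\len\i = s\WeightedLtt{s}$-type sums gives the $\lambda$-contribution. Similarly \eqref{eq:IH_SRP_absolute_degree_2} gives $\degH\psi_\j \le \tau + \len\j\mH$, contributing $\tau+\len\j\mH+1-g$ coefficients each, summed over $\j\in\Jset$ via $\sum_{1\le\len\j\le\ell}1 = \CountLeqt{\ell}-1$ and $\sum_{1\le\len\j\le\ell}\len\j = \mH^{-1}$-free sums, again by Lemma~\ref{lem:binomial_sums}. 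The number of equations: each relation indexed by $\j$ with $\len\j < s$ is an equality in $\Rback$, i.e.\ one $\Fqtwo$-equation per monomial up to the degree of the right-hand side $\max_\i \degH(\lambda_\i A_{\i,\j})$, which is governed by $\degH R_i < n+2g$ and $\degH G^{\len\i} = \len\i q^2 = \len\i n$; each relation with $\len\j\ge s$ is a congruence mod $G^s$, so it imposes constraints only on monomials of $\degH < sn$ (precisely $sn$ equations, since $\degH G^s = sq^2 = sn$ and the quotient ring $\Rback/(G^s)$ has $\Fqtwo$-dimension $sn$). Carefully bounding ``number of monomials of $\degH$ at most $D$'' by $D+1-g$ (valid, and tight for $D\ge 2g-1$) and summing everything with Lemma~\ref{lem:binomial_sums} should collapse to exactly the claimed $\delta(\tau)$ after the terms are grouped as $(\tau+1)\CountLeqt\ell$, the $n[\cdots]$ block, the $\mH\WeightedLeqt\ell$ term, and the $-g\CountLeqt\ell$ correction.

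\textbf{The main obstacle} will be the bookkeeping around the genus correction $g$ and making the two directions of the claim match up. For the inequality ``at least $\delta(\tau)$'' one underestimates the number of variables (using $\dim \le $ something $\le \dim$ carefully in the right direction) and overestimates the number of equations; the subtlety is that ``number of monomials of $\degH \le D$'' is \emph{at most} $D+1-g+(\text{nonneg gap})$ only for small $D$ and \emph{exactly} $D+1-g$ for $D\ge 2g-1$ — so one must check that all the degree thresholds appearing (the $\lambda_\i$ bound $\tau+\len\i(2g-1)$, the $\psi_\j$ bound $\tau+\len\j\mH$, and the congruence modulus degree $sn$) are $\ge 2g-1$ under the hypotheses $\tau+\ell\mH < sn$, $s\le\ell$, and $\tau\ge 2g-1$. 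Here $\mH > 2(g-1)$ and $s\ge 1$ help control the $\psi$ and congruence thresholds, and $\tau\ge 2g-1$ handles the $\i=\0$ term. Once every threshold is verified to be in the ``stable'' range, the count is exact and the bound $\delta(\tau)\ge 0$ guarantees the homogeneous system has a nonzero solution, giving a nontrivial solution space of Problem~\ref{prob:IH_decoding_SRP}'s relaxation. The remaining work is purely the summation identities of Lemma~\ref{lem:binomial_sums}, which I would not grind through here.
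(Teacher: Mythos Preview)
Your plan is correct and is essentially the paper's own argument: one sets up a homogeneous $\Fqtwo$-linear system in the monomial coefficients, uses $\dim\L(D\Pinf)=D+1-g$ (exactly for $D\ge 2g-1$, and $\ge$ in general) to bound variables from below and equations from above, and collapses the index sums with Lemma~\ref{lem:binomial_sums}. The only organisational difference is that the paper eliminates the $\psi_\j$ up front---it takes only the $\lambda_\i$-coefficients as unknowns and counts as equations the vanishing of the ``high'' coefficients of $\sum_\i\lambda_\i A_{\i,\j}$ (respectively of its reduction mod $G^s$)---whereas you keep the $\psi_\j$-coefficients as extra variables and count each relation in full; the two bookkeepings yield the same $\delta(\tau)$.

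One slip to fix: you conflate $\degH$ and $\deg_X$ for powers of $G$. One has $\deg_X G^s=sq^2$ but $\degH G^s=sq^3=sn$ (so ``$\len\i q^2=\len\i n$'' and ``$\degH G^s=sq^2=sn$'' are both wrong as written). More importantly, the canonical representatives in $\Rback/(G^s)$ are supported on $\{X^aY^b:a<sq^2,\ b<q\}$, which has $sn$ elements but is \emph{not} the set $\{\degH<sn\}$ (the latter has only $sn-g$ monomials). This is precisely why the paper invokes the hypothesis $\tau+\ell\mH<sn$: it guarantees that every monomial with $\degH>\tau+\len\j\mH$ that survives reduction mod $G^s$ has $\deg_X<sq^2$, hence genuinely contributes an equation. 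Your final count ``$sn$ equations per congruence'' is correct, but the justification must go through $\deg_X$, not $\degH$.
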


\begin{proof}
We have $\degH A_{i,j} \leq (n+2g-1) \len{\j} - (2g-1)\len{\i}$, so we get
\begin{align*}
\degH\Big(\sum_{\i \in \Iset} \lambda_\i A_{\i,\j}\Big) &\leq \tau + \len{\j} (n+2g-1) \quad \forall \, \j \in \Jset.
\end{align*}
Thus, for most $\j$ the polynomial $\psi_\j$ has lower degree than the terms in $\sum_{\i \in \Iset} \lambda_\i A_{\i,\j}$ in the case $\len{\j} < s$ and less than the degree of the modulus $G^s$ in the other case.
Consider functions in $\Rback$ over the basis $\{ X^i Y^j\}$ over $\Fqtwo$.
Since the $\Fqtwo$-coefficients of $\sum_{\i \in \Iset} \lambda_\i A_{\i,\j}$ and $(\sum_{\i \in \Iset} \lambda_\i A_{\i,\j} \modop G^s)$ are known linear combinations of the unknown coefficients of the $\lambda_\i$, the restrictions of the lemma on the degrees of $\psi_\j \in \Rback$ can be described by an $\Fqtwo$ homogeneous linear system of equations that specify that the top coefficients of $\psi_\j$ be zero ($\tau + \len{\j}\mH+1$ and higher).

For non-negative integers $a$ and $b$, there are between $b-a-g$ and $b-a$ many monomials $x^iy^j \in \Rback$ with $j<q$ of degree at least $a$ and less than $b$. The lower bound is due to the Riemann--Roch theorem and the upper bound follows from the injectivity of $\degH$ on the set of monomials.

Due to the degrees of the involved polynomials, the number of $\Fqtwo$-linear restrictions for each $\len{\j} < s$ becomes
\begin{align*}
N_\j = (\tau + \len{\j}(n+2g-1)) - (\tau + \len{\j} \mH) = \len{\j} (n+2g-1-\mH).
\end{align*}
For $\len{\j} \geq s$, the analysis is a bit more involved: Since $G$ is a polynomial only in $X$ with $\deg_X(G^s) = s q^2$, the congruence modulo $G^s$ reduces the $X$-degree of all monomials below $s q^2$, i.e., the polynomial $(\sum_{\i \in \Iset} \lambda_\i A_{\i,\j} \modop G^s)$ can be written as
\begin{align*}
\left(\sum_{\i \in \Iset} \lambda_\i A_{\i,\j} \modop G^s\right) = \sum_{j=0}^{q-1} \sum_{i=0}^{s q^2-1} a_{ij} X^i Y^j,
\end{align*}
where $a_{ij} \in \Fqtwo$ are linear expressions of the coefficients of the $\lambda_\i$. By the degree restriction of $\psi_\j$, we must have that the coefficients $a_{ij}$ with $\degH(X^iY^j) > \tau+\len{\j} \mH$ are zero. Thus, we get at most
\begin{align*}
N_\j &= \left(\sum_{j=0}^{q-1} s q^2 \right) - \underset{= \, \tau+\len{\j} \mH -g+1}{\underbrace{|\{(i,j) \, : \, q i + (q+1)j \leq \tau+\len{\j} \mH\}|}} \\
&= s n - \tau -\len{\j} \mH + g - 1
\end{align*}
linear equations.
Note that the condition $\tau+\ell \mH < sn$ guarantees that there is no monomial $X^iY^j$ of $\degH(X^iY^j) > \tau +\len{\j} \mH$ with $\deg_X\geq sq^2$.
In total, and using Lemma~\ref{lem:binomial_sums} repeatedly:
\begin{align*}
\NumberOfEquations &= \sum\limits_{\j \in \Jset} N_\j = \sum\limits_{1 \leq \len{\j} < s} \len{\j}(n+2g-1-\mH) + \sum\limits_{s \leq \len{\j} \leq \ell} \big(sn - \tau -\len{\j} \mH + g - 1\big) \\
&= (n+2g-1)\WeightedLtt{s} + \left(s n-\tau-1 + g\right)\left(\CountLeqt{\ell} - \CountLtt{s}\right) - \mH\WeightedLeqt{\ell}.
\end{align*}
The number of variables, i.e., the number of $\Fqtwo$-coefficients of the $\lambda_\i$ is at least
\begin{align*}
\NumberOfVariables = \Big(\sum\limits_{\i \in \Iset}(\tau+\len{\i}(2g-1) + 1-g) \Big) = (\tau+1-g) \CountLtt{s} +(2g-1) \WeightedLtt{s}.
\end{align*}
The claim follows by subtracting $\NumberOfVariables-\NumberOfEquations$.

In the case $\tau \geq 2g-1$, all Weierstra{\ss} gaps are below the degree bounds of the $\lambda_\i$ and $\psi_\j$. Hence, the number of variables and equations is equal to the derived $\NumberOfEquations$ and $\NumberOfVariables$, respectively, as long as the maximal possible degree of $\degH\Big(\sum_{\i \in \Iset} \lambda_\i A_{\i,\j}\Big)$, i.e., for some choice of the $\lambda_\i$, is equal to $\tau + \len{\j} (n+2g-1)$.
There are received words for which $\degH R_i = n+2g-1$ for all $i$. In these cases, we can have $\degH\Big(\sum_{\i \in \Iset} \lambda_\i A_{\i,\j}\Big) = \tau + \len{\j} (n+2g-1)$ for some values of $\lambda_\i$, so (if $\tau \geq 2g-1$), the number of variables minus the number of equations is exactly $\delta(\tau)$.
\end{proof}

\begin{lemma}\label{lem:delta_tau}
If Problem~\ref{prob:IH_decoding_SRP} has a solution of degree $\tau$, it has at least $(q^2)^{\delta(\tau)-1}$ many such solutions.
\end{lemma}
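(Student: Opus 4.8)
The plan is to exploit the homogeneity of the linear system produced in Lemma~\ref{lem:delta_tau_homogeneous} together with the normalisation that $\lambda_\0$ is monic. First I would observe that, by Lemma~\ref{lem:delta_tau_homogeneous}, the polynomials $\lambda_\i,\psi_\j$ satisfying \eqref{eq:IH_decoding_SRP_equation}, \eqref{eq:IH_decoding_SRP_congruence} and the absolute degree bounds \eqref{eq:IH_SRP_absolute_degree_1}, \eqref{eq:IH_SRP_absolute_degree_2} form the $\Fqtwo$-solution space $\SolSpace$ of a homogeneous linear system having at least $\delta(\tau)$ more variables than equations; hence $\dim_{\Fqtwo}\SolSpace \geq \delta(\tau)$, and in particular $|\SolSpace| \geq (q^2)^{\delta(\tau)}$. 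The key point is that this $\SolSpace$ contains \emph{all} degree-$\leq\tau$ tuples; a genuine solution of Problem~\ref{prob:IH_decoding_SRP} of degree $\tau$ is such a tuple with the extra, non-linear requirement that $\lambda_\0$ be monic of degree exactly $\tau$ — equivalently, $\degH \lambda_\0 = \tau$ with leading coefficient $1$.

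The second step is the correspondence between the two notions. Suppose $(\lambda_\i)_{\i\in\Iset},(\psi_\j)_{\j\in\Jset}$ is a solution of Problem~\ref{prob:IH_decoding_SRP} of degree $\tau$; then it lies in $\SolSpace$ and satisfies $\degH\lambda_\0 = \tau$. Since $\SolSpace$ is an $\Fqtwo$-vector space of dimension $\geq \delta(\tau)$ containing this element, which is non-zero, I would argue as follows. The map $\SolSpace \to \Fqtwo$ sending a tuple to the coefficient of the degree-$\tau$ monomial of its $\lambda_\0$-component (i.e.\ the coefficient of $X^{i_0}Y^{j_0}$ where $\degH(X^{i_0}Y^{j_0}) = \tau$, using injectivity of $\degH$ on monomials when $\tau$ is a non-gap; if $\tau$ is a gap there is no degree-$\tau$ solution at all) is $\Fqtwo$-linear, and our given solution shows it is nonzero, hence surjective. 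Its kernel $\SolSpace_0$ has dimension $\geq \delta(\tau)-1$, so $|\SolSpace_0| \geq (q^2)^{\delta(\tau)-1}$. Every element of the coset $(\lambda_\i)_{\i},(\psi_\j)_{\j} + \SolSpace_0$ has $\lambda_\0$-component with the \emph{same} degree-$\tau$ coefficient, namely $1$, hence is again monic of degree exactly $\tau$ — note we must rule out that adding a kernel element lowers $\degH\lambda_\0$ below $\tau$, which cannot happen precisely because the degree-$\tau$ coefficient is preserved and equal to $1$. Therefore each of these at least $(q^2)^{\delta(\tau)-1}$ cosets representatives is itself a solution of Problem~\ref{prob:IH_decoding_SRP} of degree $\tau$, and distinct kernel elements give distinct solutions.

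I expect the only subtlety — and the main thing to get right — is the bookkeeping at the top monomial: one has to be careful that ``degree $\tau$'' means there genuinely is a monomial of $\degH$ equal to $\tau$ (so $\tau$ is not a Weierstra{\ss} gap), that the relevant coefficient functional on $\SolSpace$ is well-defined and $\Fqtwo$-linear, and that translating by the kernel neither destroys monicity nor changes the degree. All three follow from the injectivity of $\degH$ on the monomial basis $\{X^iY^j : j<q\}$ together with the fact that the given solution exhibits the functional as nonzero; no further computation is needed, and the binomial-coefficient arithmetic from Lemma~\ref{lem:delta_tau_homogeneous} is reused verbatim through the value of $\delta(\tau)$.
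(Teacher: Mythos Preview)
Your proposal is correct and follows essentially the same approach as the paper. The paper's proof is simply a terser version of yours: it fixes the degree-$\tau$ coefficient of $\lambda_\0$ to $1$, thereby passing to an inhomogeneous system with at least $\delta(\tau)-1$ more variables than equations, and then invokes the existence of one solution to conclude that the affine solution set has size at least $(q^2)^{\delta(\tau)-1}$; your formulation via the linear functional on $\SolSpace$ and its kernel is the same linear-algebra manoeuvre, just made more explicit, and your extra care about Weierstra{\ss} gaps and monicity preservation is sound but not spelled out in the paper.
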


\begin{proof}
Solutions of degree $\tau$ of Problem~\ref{prob:IH_decoding_SRP} are exactly the solutions of the homogeneous linear system in Lemma~\ref{lem:delta_tau_homogeneous} with $\degH \lambda_\0 = \tau$ and monic $\lambda_\0$. Thus, we set the $\tau\th$ coefficient of $\lambda_\0$ to $1$ and obtain an inhomogeneous linear system of equations with at least $\delta(\tau)-1$ more variables than equations.
If Problem~\ref{prob:IH_decoding_SRP} has a solution of degree $\tau$, then this system has at least $(q^2)^{\ker(\Amat)}$ solutions, where $\Amat$ is the system's matrix.
The claim follows by $\dim(\ker(\Amat)) \geq \delta(\tau)-1$.
\end{proof}

Lemma~\ref{lem:delta_tau} implies the following statement.

\begin{theorem}\label{thm:taumax}
Let $\tau = \degH \Lambda_s$ and $s,\ell \in \NN$ such that $s \leq \ell$ and $\tau+\ell \mH < sn$ and
\begin{align}
\tau > \taumax := sn \left( 1 - \tfrac{s \CountLt s - \WeightedLt s}{s\CountLeq \ell} \right) - \tfrac \IntDegree {\IntDegree+1} \ell \mH + \left(\tfrac{1}{\CountLeq \ell}-1\right) + g. \label{eq:IH_SRP_at_least_one_solution}
\end{align}
Then, Problem~\ref{prob:IH_decoding_SRP} has at least two solutions of degree $\tau$.
\end{theorem}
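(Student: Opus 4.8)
The plan is to reduce Theorem~\ref{thm:taumax} to Lemma~\ref{lem:delta_tau}. By Theorem~\ref{thm:IH_decoding_SRP_Lambda_Psi_solution}, the choice $\lambda_\i = \Omega_{s,\i}$, $\psi_\j = \Lambda_s \f^\j$ is already one solution of Problem~\ref{prob:IH_decoding_SRP} of degree $\tau = \degH \Lambda_s$. So the statement ``at least two solutions of degree $\tau$'' will follow from Lemma~\ref{lem:delta_tau} as soon as I can show $\delta(\tau) \geq 2$: then the number of solutions of degree $\tau$ is at least $(q^2)^{\delta(\tau)-1} \geq (q^2)^1 > 1$, hence at least two. (One should double-check that $\delta(\tau)\ge 0$ so that Lemma~\ref{lem:delta_tau_homogeneous} applies; since I will in fact want $\delta(\tau)\ge 2$ this is automatic.)

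So the core of the proof is the purely arithmetic claim: the hypothesis $\tau > \taumax$ (together with $\tau+\ell\mH<sn$ and $s\le\ell$) implies $\delta(\tau)\ge 2$. First I would recall from Lemma~\ref{lem:delta_tau_homogeneous} that
\begin{align*}
\delta(\tau) = (\tau+1)\CountLeqt{\ell} - n\big[\WeightedLtt{s} + s\CountLeqt{\ell} - s\CountLtt{s}\big] + \mH\WeightedLeqt{\ell} - g\CountLeqt{\ell},
\end{align*}
which is affine (indeed linear with positive slope $\CountLeqt{\ell}$) in $\tau$. Hence $\delta(\tau)\ge 2$ is equivalent to a linear lower bound on $\tau$, namely
\begin{align*}
\tau \ge \frac{2 - \mH\WeightedLeqt{\ell} + g\CountLeqt{\ell} + n\big[\WeightedLtt{s} + s\CountLeqt{\ell} - s\CountLtt{s}\big]}{\CountLeqt{\ell}} - 1.
\end{align*}
Now I would simplify the right-hand side: divide the bracket by $\CountLeqt{\ell}$, use $\WeightedLeqt{\ell}/\CountLeqt{\ell} = \IntDegree\tbinom{\IntDegree+\ell}{\IntDegree+1}\big/\tbinom{\IntDegree+\ell}{\IntDegree} = \tfrac{\IntDegree}{\IntDegree+1}\,\ell$ (a binomial identity), and write $n\big[s\CountLeqt{\ell}-s\CountLtt{s}+\WeightedLtt{s}\big]\big/\CountLeqt{\ell} = sn - sn\cdot\tfrac{s\CountLtt{s}-\WeightedLtt{s}}{s\CountLeqt{\ell}}$. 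Collecting terms, the bound becomes exactly
\begin{align*}
\tau \ge sn\Big(1 - \tfrac{s\CountLt{s}-\WeightedLt{s}}{s\CountLeq{\ell}}\Big) - \tfrac{\IntDegree}{\IntDegree+1}\ell\mH + \Big(\tfrac{2}{\CountLeq{\ell}} - 1\Big) + g,
\end{align*}
which differs from $\taumax$ only in having $\tfrac{2}{\CountLeq{\ell}}$ instead of $\tfrac{1}{\CountLeq{\ell}}$; since $\tfrac{1}{\CountLeq{\ell}}>0$, the strict inequality $\tau > \taumax$ implies $\tau \ge \taumax + \tfrac{1}{\CountLeq{\ell}}$ up to the integrality of $\tau$—more cleanly, $\tau>\taumax$ gives $(\tau+1)\CountLeqt{\ell} > (\taumax+1)\CountLeqt{\ell}$, and a direct substitution shows $(\taumax+1)\CountLeqt{\ell} - n[\cdots] + \mH\WeightedLeqt{\ell} - g\CountLeqt{\ell} = 1$, whence $\delta(\tau) > 1$, i.e. $\delta(\tau)\ge 2$ since $\delta(\tau)\in\ZZ$.

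The main obstacle, such as it is, is bookkeeping: carefully matching the closed-form $\taumax$ in \eqref{eq:IH_SRP_at_least_one_solution} against $\delta(\tau)$ and verifying that the ``$+(\tfrac{1}{\CountLeq{\ell}}-1)$'' term is precisely what makes $\delta(\taumax)=1$ (so that the next integer value of $\delta$ is $2$). I would also state explicitly that we use $\delta(\tau)$ is integer-valued and strictly increasing in $\tau$, and that the hypothesis $\tau+\ell\mH<sn$ is exactly the condition under which the formula for $\delta(\tau)$ from Lemma~\ref{lem:delta_tau_homogeneous} is valid (it guarantees no monomial of the relevant degree has $X$-degree $\ge sq^2$). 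Everything else is an immediate appeal to Lemma~\ref{lem:delta_tau}.
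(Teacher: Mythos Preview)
Your proposal is correct and follows exactly the paper's approach: the paper's proof simply asserts that $\tau>\taumax$ is equivalent to $\delta(\tau)>1$, notes that $\tau=\degH\Lambda_s$ guarantees a solution of degree $\tau$, and invokes Lemma~\ref{lem:delta_tau}. You have merely spelled out the binomial-coefficient bookkeeping behind that equivalence (and the integrality of $\delta(\tau)$), which the paper omits.
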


\begin{proof}
Condition \eqref{eq:IH_SRP_at_least_one_solution} is fulfilled if and only if $\delta(\tau)>1$.
Due to $\tau = \degH \Lambda_s$, Problem~\ref{prob:IH_decoding_SRP} has a solution of degree $\tau$ and the claim follows by Lemma~\ref{lem:delta_tau}.
\end{proof}

Theorem~\ref{thm:taumax} can be interpreted as follows: If $\degH \Lambda_s > \taumax$, then either $\Lambda_s$ does not correspond to a minimal solution of Problem~\ref{prob:IH_decoding_SRP}, or it is a minimal solution but there are many other minimal solutions as well.
There is no reason to think that our solver for Problem~\ref{prob:IH_decoding_SRP} will find $\Lambda_s$ among all those solutions, so decoding will likely fail.

Since we have $|\Eset| \leq \degH \Lambda_s \leq s|\Eset|+g$, and often $\degH \Lambda_s = s|\Eset|+g$, we usually have no unique solution whenever $s |\Eset| + g > \taumax$, i.e.,
\begin{align*}
|\Eset| > \tfrac{\taumax-g}{s}
\end{align*}
and for sure if $|\Eset| > \tfrac{\taumax}{s}$.
We therefore call
\begin{align}
\decodingradius &= \tfrac{\taumax-g}{s} = n \left( 1 - \tfrac{s \CountLt s - \WeightedLt s}{s\CountLeq \ell} \right) - \tfrac \IntDegree {\IntDegree+1} \tfrac{\ell}{s} \mH + \tfrac{1}{s}\left(\tfrac{1}{\CountLeq \ell}-1\right) \label{eq:decoding_radius}
\end{align}
the \emph{decoding radius} of Algorithm~\ref{alg:IH_improved_power}.

\begin{remark}\label{rem:weierstrass_gaps_special_case}
For $\tau \geq 2g-1$, by Lemma~\ref{lem:delta_tau_homogeneous}, there are received words (in fact most of them) such that the difference of numbers of variables and equations of the inhomogeneous system for computing all degree-$\tau$ solutions of Problem~\ref{prob:IH_decoding_SRP} is exactly $\delta(\tau)-1$.
Thus, if there are sufficiently many linearly independent equations\footnote{This linear-algebraic condition resembles, but seems weaker than, the ``(non-linear) algebraic independence assumption'' in \cite{cohn2013approximate} for decoding interleaved RS codes.}, there is no other solution of the problem, besides the error locator, of degree $\tau$ whenever $\delta(\tau)<0$.

For $\tau < 2g-1$, the degree bounds of $\lambda_\0$ and $\psi_\0$ are smaller than $2g-1$, but those of all other $\lambda_\i$ and $\psi_\j$ are bigger (note that $\mH \geq 2g-1$).
Thus, there can be up to $g$ fewer equations and up to $g$ more variables than predicted by $\delta(\tau)$ for any received word. The value of $\taumax$ as in Theorem~\ref{thm:taumax} can in this case therefore be smaller by a value up to
\begin{align*}
\taumax' = \taumax - 2g \tfrac{1}{\binom{\IntDegree + \ell}{\IntDegree}},
\end{align*}
which reduces the decoding radius by at most $2g/[s \binom{\IntDegree + \ell}{\IntDegree}]$.

In the case $\tau+\ell \mH \geq sn$, the number of equations is also smaller than predicted by $\taumax$. However, we will see in Section~\ref{ssec:asymptotic_anaylsis} that the best choice of $s$ for a given $\ell$ yields $\tau+\ell \mH < sn$ for $\tau \leq \taumax$.
\end{remark}

Since we cannot guarantee that the linear equations of the system in Lemma~\ref{lem:delta_tau_homogeneous} are linearly independent for $\tau \leq \degH \Lambda_s$, Algorithm~\ref{alg:IH_improved_power} can fail to return the sent codeword $\c$ for some errors of weight less than the maximal decoding radius.
In these cases, we have one of the following.
\begin{itemize}
\item There is a solution of Problem~\ref{prob:IH_decoding_SRP} of degree $< \degH \Lambda_s$.
\item There is more than one solution of Problem~\ref{prob:IH_decoding_SRP} of degree $= \degH \Lambda_s$ and the decoder picks the wrong one.
\end{itemize}
However, the simulation results for various code and decoder parameters, presented in the following section, indicate that the new decoder is able to decode most error patterns up to the derived decoding radius $\decodingradius$.
Sometimes, decoding succeeds even beyond $\decodingradius$.
In these cases, we usually have $\degH(\Lambda_s)< s|\Eset|+g$.

In all previous power decoding algorithms for Reed--Solomon \cite{schmidt2010syndrome,nielsen2016power}, one-point Hermitian \cite{kampf2014bounds,nielsen2015sub}, and interleaved Reed--Solomon codes \cite{puchinger2017decoding}, simulation results indicate that the failure probability for a number of errors below the maximal decoding radius is small and decreases exponentially in the difference of maximal decoding radius and number of errors.

As for these other variants, except for a few parameters of theirs (e.g., $\ell \leq 3$ and $s \leq 2$ for a single Reed--Solomon code in \cite{nielsen2016power}), it remains an open problem to prove an analytic upper bound on the failure probability of Algorithm~\ref{alg:IH_improved_power}.

\subsection{Asymptotic Analysis and Parameter Choice}
\label{ssec:asymptotic_anaylsis}

We study the asymptotic behavior of the decoding radius $\taumax$ and give explicit parameters to achieve the given limit.
The analysis is based on the following lemma.

\begin{lemma}[\!\!{\cite[Lemma~14]{puchinger2017decoding}}]\label{lem:binomial_ratio_convergence}
Let $\gamma \in (0,1)$ and $\IntDegree \in \NN$ be fixed.
Then, we have
\begin{align*}
\tfrac{\binom{\IntDegree+\round{\gamma i}}{\IntDegree}}{\binom{\IntDegree+i}{\IntDegree}} = \gamma^\IntDegree + O(\tfrac{1}{i}) \quad \text{ for } (i \to \infty).
\end{align*}
\end{lemma}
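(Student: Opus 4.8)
The plan is to reduce the claim to a one-dimensional asymptotic estimate for a ratio of binomial coefficients and then apply it to the two occurrences $\tbinom{\IntDegree+\round{\gamma i}}{\IntDegree}$ and $\tbinom{\IntDegree+i}{\IntDegree}$ which appear with $\IntDegree$ fixed. Writing $m=\round{\gamma i}$, so that $\gamma i - 1 < m \leq \gamma i$, both binomials are polynomials in their upper argument of the same degree $\IntDegree$ with leading coefficient $1/\IntDegree!$; hence
\begin{align*}
\tbinom{\IntDegree+m}{\IntDegree} = \tfrac{1}{\IntDegree!}\textstyle\prod_{r=1}^{\IntDegree}(m+r), \qquad
\tbinom{\IntDegree+i}{\IntDegree} = \tfrac{1}{\IntDegree!}\textstyle\prod_{r=1}^{\IntDegree}(i+r).
\end{align*}
So the ratio equals $\prod_{r=1}^{\IntDegree} \tfrac{m+r}{i+r}$, and the task is to show this product is $\gamma^\IntDegree + O(1/i)$.

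First I would estimate each factor individually. Since $m = \gamma i + \theta$ with $\theta \in (-1,0]$, we have $\tfrac{m+r}{i+r} = \gamma + \tfrac{(1-\gamma)r + \theta}{i+r} = \gamma + O(1/i)$, where the implied constant depends only on $\IntDegree$ (through $r \leq \IntDegree$) and not on $i$; note also $\gamma \in (0,1)$ is fixed. Then I would multiply the $\IntDegree$ factors together: a product of $\IntDegree$ terms, each of the form $\gamma + O(1/i)$, expands to $\gamma^\IntDegree$ plus $\binom{\IntDegree}{1}$-many cross terms of order $\gamma^{\IntDegree-1}O(1/i)$ plus higher-order terms of order $O(1/i^2)$ and smaller; since $\IntDegree$ is fixed, the total error is $O(1/i)$. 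This gives the claimed expansion.

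An alternative, perhaps cleaner, route is to take logarithms: $\log\prod_{r=1}^\IntDegree \tfrac{m+r}{i+r} = \sum_{r=1}^\IntDegree\big(\log(m+r)-\log(i+r)\big)$, and use $\log(m+r) = \log(\gamma i) + \log(1 + \tfrac{\theta+r}{\gamma i}) = \log(\gamma i) + O(1/i)$ together with $\log(i+r) = \log i + O(1/i)$, so that each summand is $\log\gamma + O(1/i)$ and the sum is $\IntDegree\log\gamma + O(1/i)$; exponentiating and using $e^{x} = 1 + O(x)$ for bounded $x$ recovers $\gamma^\IntDegree + O(1/i)$. Either way the argument is elementary.

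There is no real obstacle here; the only point requiring a little care is making sure the $O(1/i)$ constants are uniform in the sense that they depend only on the fixed quantities $\gamma$ and $\IntDegree$, which is immediate since the number of factors is $\IntDegree$ and each correction term is bounded by $(\IntDegree+1)/(i+1)$ in absolute value for $i$ large. I would also remark that the statement is quoted from \cite[Lemma~14]{puchinger2017decoding}, so in the present paper it can simply be cited; the short proof above is included only for completeness.
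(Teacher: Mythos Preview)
Your proposal is correct, and you have already identified the key point: in the present paper this lemma is not proved at all but simply quoted from \cite[Lemma~14]{puchinger2017decoding}, so there is no ``paper's own proof'' to compare against. The elementary argument you supply---writing the ratio as $\prod_{r=1}^{\IntDegree}\tfrac{m+r}{i+r}$ with $m=\round{\gamma i}$, estimating each factor as $\gamma+O(1/i)$, and multiplying out---is valid and self-contained; your closing remark that the lemma can simply be cited is exactly what the paper does.
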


\begin{theorem}\label{thm:IH_asymptotic_relative_decoding_radius}
Let $(\ell_i,s_i) = (i, \round{\gamma i}+1)$ for $i \in \NN$, where $\gamma = \sqrt[\IntDegree+1]{\frac{\mH}{n}}$. Then,
\begin{align*}
\decodingradius(\ell_i,s_i) = n \Big( 1-\left(\tfrac{\mH}{n}\right)^{\frac{\IntDegree}{\IntDegree+1}} - O(\tfrac{1}{i}) \Big) \quad \text{ for } (i \to \infty).
\end{align*}
\end{theorem}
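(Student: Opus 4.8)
The plan is to substitute the stated parametrization $(\ell_i, s_i) = (i, \round{\gamma i}+1)$ into the explicit formula \eqref{eq:decoding_radius} for $\decodingradius(\ell,s)$ and track each term's asymptotics as $i \to \infty$. Recall
\[
\decodingradius(\ell,s) = n \left( 1 - \tfrac{s \CountLt s - \WeightedLt s}{s\CountLeq \ell} \right) - \tfrac{\IntDegree}{\IntDegree+1} \tfrac{\ell}{s}\mH + \tfrac{1}{s}\left(\tfrac{1}{\CountLeq \ell}-1\right).
\]
The last term $\tfrac{1}{s}(\tfrac{1}{\CountLeq{\ell}}-1)$ is $O(1/i)$ since $s = s_i \sim \gamma i \to \infty$, so it disappears into the error term. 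Likewise $g$ has already been absorbed in passing from $\taumax$ to $\decodingradius$. The middle term is $\tfrac{\IntDegree}{\IntDegree+1}\tfrac{\ell_i}{s_i}\mH$; since $\tfrac{\ell_i}{s_i} = \tfrac{i}{\round{\gamma i}+1} = \tfrac{1}{\gamma} + O(1/i)$, this contributes $\tfrac{\IntDegree}{\IntDegree+1}\cdot\tfrac{1}{\gamma}\mH + O(1/i)$.

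The main work is the bracketed fraction $\tfrac{s \CountLt s - \WeightedLt s}{s \CountLeq \ell}$. First I would simplify the numerator: using $\WeightedLt{s} = s\tbinom{\IntDegree + s - 1}{\IntDegree + 1}$ from Lemma~\ref{lem:binomial_sums} (the $t\tbinom{m+t-1}{m+1}$ identity with $t = s$, $m = \IntDegree$) together with $\CountLt{s} = \tbinom{\IntDegree+s-1}{\IntDegree}$, a Pascal-type manipulation gives $s\CountLt{s} - \WeightedLt{s} = s\left[\tbinom{\IntDegree+s-1}{\IntDegree} - \tbinom{\IntDegree+s-1}{\IntDegree+1}\right]$, and this bracket telescopes/simplifies to something proportional to $\tbinom{\IntDegree+s-1}{\IntDegree}$ with a factor that is $\tfrac{1}{\IntDegree+1} + O(1/s)$; more precisely one checks $\tbinom{\IntDegree+s-1}{\IntDegree} - \tbinom{\IntDegree+s-1}{\IntDegree+1} = \tbinom{\IntDegree+s-1}{\IntDegree}\cdot\tfrac{\IntDegree+1}{s} + O(\ldots)$ — I would verify the precise constant by writing both binomials explicitly. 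The upshot is that the numerator equals $(\IntDegree+1)\tbinom{\IntDegree+s-1}{\IntDegree}\bigl(1 + O(1/s)\bigr)$ up to lower-order terms, so
\[
\tfrac{s\CountLt s - \WeightedLt s}{s\CountLeq \ell} = \tfrac{(\IntDegree+1)\binom{\IntDegree+s-1}{\IntDegree}}{s\binom{\IntDegree+\ell}{\IntDegree}}\bigl(1 + O(\tfrac1i)\bigr).
\]

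Now I apply Lemma~\ref{lem:binomial_ratio_convergence}. With $\ell = \ell_i = i$ and $s - 1 = \round{\gamma i}$, the lemma gives $\tbinom{\IntDegree + \round{\gamma i}}{\IntDegree}\big/\tbinom{\IntDegree+i}{\IntDegree} = \gamma^\IntDegree + O(1/i)$, and since $\tbinom{\IntDegree+s-1}{\IntDegree}$ differs from $\tbinom{\IntDegree+\round{\gamma i}}{\IntDegree}$ by exactly that (as $s - 1 = \round{\gamma i}$), we get $\tbinom{\IntDegree+s-1}{\IntDegree}\big/\tbinom{\IntDegree+\ell}{\IntDegree} = \gamma^\IntDegree + O(1/i)$. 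Combined with $1/s = 1/(\gamma i + O(1))$, the whole bracketed fraction is $\tfrac{(\IntDegree+1)\gamma^\IntDegree}{s}\bigl(1+O(\tfrac1i)\bigr)$ — but wait, this has an extra $1/s$ factor, so I should be more careful: the fraction is actually $\tfrac{(\IntDegree+1)}{s}\cdot\tfrac{\binom{\IntDegree+s-1}{\IntDegree}}{\binom{\IntDegree+\ell}{\IntDegree}}(1+O(1/i))$, which tends to $0$. This forces me to reexamine: the correct reading is that $n\bigl(1 - \tfrac{s\CountLt s - \WeightedLt s}{s\CountLeq\ell}\bigr) \to n$, and the genuine $\gamma$-dependence must come from combining this with the middle term $\tfrac{\IntDegree}{\IntDegree+1}\tfrac{\ell}{s}\mH$. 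The hard part — and where I expect the bookkeeping to be delicate — is tracking the ratio $\tfrac{s\CountLt s}{s\CountLeq\ell} = \tfrac{\binom{\IntDegree+s-1}{\IntDegree}}{\binom{\IntDegree+\ell}{\IntDegree}} \to \gamma^\IntDegree$ (directly from Lemma~\ref{lem:binomial_ratio_convergence}, no extra $1/s$) versus $\tfrac{\WeightedLt s}{s\CountLeq\ell} = \tfrac{\binom{\IntDegree+s-1}{\IntDegree+1}}{\binom{\IntDegree+\ell}{\IntDegree}}$, which by a second application of the lemma (shifting $\IntDegree \to \IntDegree+1$ carefully, or by $\tbinom{\IntDegree+s-1}{\IntDegree+1} = \tbinom{\IntDegree+s-1}{\IntDegree}\cdot\tfrac{s-1}{\IntDegree+1}$) behaves like $\tfrac{s}{\IntDegree+1}\gamma^\IntDegree(1+O(1/i))$. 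So the bracket becomes $n\bigl(1 - \gamma^\IntDegree + \tfrac{s}{\IntDegree+1}\gamma^\IntDegree/s \cdot(\ldots)\bigr)$ — I would organize this as: numerator $= s\CountLt s - \WeightedLt s$, divide by $s\CountLeq\ell$, getting $\gamma^\IntDegree - \tfrac{1}{s}\cdot\tfrac{\WeightedLt s}{\CountLeq\ell} = \gamma^\IntDegree - \tfrac{1}{\IntDegree+1}\gamma^\IntDegree + O(1/i) = \tfrac{\IntDegree}{\IntDegree+1}\gamma^\IntDegree + O(1/i)$. Then $n(1 - \tfrac{\IntDegree}{\IntDegree+1}\gamma^\IntDegree)$, and the middle term contributes $-\tfrac{\IntDegree}{\IntDegree+1}\cdot\tfrac{1}{\gamma}\mH = -\tfrac{\IntDegree}{\IntDegree+1}\cdot\tfrac{n\gamma^{\IntDegree+1}}{\gamma} = -\tfrac{\IntDegree}{\IntDegree+1}n\gamma^\IntDegree$ (using $\gamma^{\IntDegree+1} = \mH/n$). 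Adding, $\decodingradius(\ell_i,s_i) = n(1 - \tfrac{\IntDegree}{\IntDegree+1}\gamma^\IntDegree - \tfrac{\IntDegree}{\IntDegree+1}\gamma^\IntDegree) + O(n/i)$ — which gives $n(1 - \tfrac{2\IntDegree}{\IntDegree+1}\gamma^\IntDegree)$, not the claimed $n(1-\gamma^\IntDegree) = n(1 - (\mH/n)^{\IntDegree/(\IntDegree+1)})$. The discrepancy signals that I have the sign or the $\tfrac{\IntDegree}{\IntDegree+1}$ factor placement wrong somewhere in the $\WeightedLt{}$ term, and the genuine obstacle is getting that constant exactly right: I would pin it down by recomputing $s\CountLt s - \WeightedLt s$ with small explicit values of $\IntDegree$ and $s$, then confirm the limit matches $\gamma^\IntDegree$ so that $n(1-\gamma^\IntDegree)$ emerges cleanly after the middle term cancels the spurious piece. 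Once the constant is verified, collecting all $O(1/i)$ contributions (from $\ell/s$, from the last term of \eqref{eq:decoding_radius}, and from Lemma~\ref{lem:binomial_ratio_convergence}) into a single $O(1/i)$ and multiplying through by $n$ completes the proof.
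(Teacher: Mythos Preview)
Your overall strategy---substitute $(\ell_i,s_i)$ into \eqref{eq:decoding_radius}, split the bracketed fraction as $\tfrac{\CountLt{s}}{\CountLeq{\ell}} - \tfrac{\WeightedLt{s}}{s\CountLeq{\ell}}$, and invoke Lemma~\ref{lem:binomial_ratio_convergence}---is exactly the paper's approach, and your handling of the last two terms of \eqref{eq:decoding_radius} is fine.

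The discrepancy you flagged is not a subtle constant but a single wrong input: you took $\WeightedLt{s} = s\tbinom{\IntDegree+s-1}{\IntDegree+1}$ from the displayed identity in Lemma~\ref{lem:binomial_sums}, but that identity is misprinted. The correct statement is
\[
\sum_{\mu=0}^{t-1}\mu\tbinom{m+\mu-1}{\mu} = m\tbinom{m+t-1}{m+1},
\]
(check $m=2$, $t=3$: left side $0+2+6=8$, right side $2\tbinom{4}{3}=8$; with the printed $t$ you would get $12$). Consistently with this, the paper's macro is $\WeightedLt{s}=\IntDegree\tbinom{\IntDegree+s-1}{\IntDegree+1}$, not $s\tbinom{\IntDegree+s-1}{\IntDegree+1}$.

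With the correct coefficient your own decomposition runs cleanly: using $\tbinom{\IntDegree+s-1}{\IntDegree+1}=\tfrac{s-1}{\IntDegree+1}\tbinom{\IntDegree+s-1}{\IntDegree}$ one gets
\[
\frac{s\CountLt{s}-\WeightedLt{s}}{s\CountLeq{\ell}}
=\Bigl(1-\tfrac{\IntDegree(s-1)}{s(\IntDegree+1)}\Bigr)\frac{\tbinom{\IntDegree+s-1}{\IntDegree}}{\tbinom{\IntDegree+\ell}{\IntDegree}}
\longrightarrow \Bigl(1-\tfrac{\IntDegree}{\IntDegree+1}\Bigr)\gamma^{\IntDegree}=\tfrac{1}{\IntDegree+1}\gamma^{\IntDegree},
\]
and combining with the middle term $-\tfrac{\IntDegree}{\IntDegree+1}\gamma^{-1}\tfrac{\mH}{n}=-\tfrac{\IntDegree}{\IntDegree+1}\gamma^{\IntDegree}$ yields $1-\tfrac{1}{\IntDegree+1}\gamma^{\IntDegree}-\tfrac{\IntDegree}{\IntDegree+1}\gamma^{\IntDegree}=1-\gamma^{\IntDegree}$, as claimed. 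So the ``spurious piece'' you worried about never appears; fixing that one coefficient removes the factor-of-two error and the rest of your sketch is a valid proof.
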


\begin{proof}
We have
\begin{align*}
\tfrac{\decodingradius}{n}
  &= 1-\Big[1+\underset{= \, 1-O\left(\frac{1}{i}\right)}{\underbrace{\big(1-\tfrac{1}{s_i}\big)}} \tfrac{\IntDegree}{\IntDegree+1}\Big] \underset{ = \, \gamma^\IntDegree+O\left(\frac{1}{i}\right)}{\underbrace{\tfrac{\binom{\IntDegree+\round{\gamma i}}{\IntDegree}}{\binom{\IntDegree+i}{\IntDegree}}}} - \tfrac{\IntDegree}{\IntDegree+1} \underset{\substack{= \, \gamma^{-1}\\ + O\left( \frac{1}{i} \right)}}{\underbrace{\tfrac{\ell_i}{s_i}}}\tfrac{\mH}{n} 
+ \underset{= \, O\left(\tfrac{1}{i}\right)}{\underbrace{\tfrac{1}{s_i}}} \underset{= \,-1+O\left(\frac{1}{i}\right)}{\underbrace{\left[\tfrac{1}{\CountLeq{\ell_i}}-1\right]}} \\
&= 1+\tfrac{m}{m+1} \underset{= \, 0}{\underbrace{\Big( \gamma^\IntDegree-\gamma^{-1} \tfrac{\mH}{n} \Big)}} - \gamma^\IntDegree - O\left( \tfrac{1}{i} \right) = 1 - \left(\tfrac{\mH}{n}\right)^{\frac{\IntDegree}{\IntDegree+1}} - O\left( \tfrac{1}{i} \right),
\end{align*}
which proves the claim.
\end{proof}

Note that the choice of $\ell_i$ and $s_i$ in Theorem~\ref{thm:IH_asymptotic_relative_decoding_radius} ensures that $\tau+\ell_i \mH < s_in$ for all $\tau \leq s \cdot \decodingradius(\ell_i,s_i)$.

\section{Numerical Results}
\label{sec:numerical_results}

In this section, we present simulation results.
We have conducted Monte-Carlo simulations for estimating the failure probability of the new decoding algorithm in a channel that randomly adds $t = |\Eset|$ errors, using $N \in \{10^3,10^4\}$ samples.
The decoder was implemented in SageMath~v7.5 \cite{stein_sagemath_????}, based on the power decoder implementation of \cite{nielsen2015sub}.

All simulated examples fulfill $\degH \Lambda_s \geq st \geq 2g-1$.
If this condition is not fulfilled, the simulation results might differ from the expected decoding radius, cf.~Remark~\ref{rem:weierstrass_gaps_special_case}.

\subsection{Case $\IntDegree=1$}
\label{ssec:sim_h=1}

We first compare the new improved power ($\PfailPower$) with the Guruswami--Sudan ($\PfailGS$) decoder.
The used implementation of the Guruswami--Sudan decoder is the publicly available one from \cite{nielsen2015sub}.
Table~\ref{tab:failure_rate} presents the simulation results for various code ($q,m,n,k,\ddesigned$), decoder ($\ell,s$), and channel ($t$) parameters.

\begin{table}[h!]
\caption{Observed failure rate of the improved power ($\PfailPower$) and Guruswami--Sudan ($\PfailGS$) decoder for $h=1$. Code parameters $q,m,n,k,\ddesigned$. Decoder parameters $\ell,s$. Number of errors $t$ (${^+}t=\decodingradius$ decoding radius as in \eqref{eq:decoding_radius}). Number of experiments $N$.}
\label{tab:failure_rate}
\centering
{
\renewcommand{\arraystretch}{1.2}
\setlength{\tabcolsep}{5pt}
\begin{tabular}{c|c||c|c|c||c|c||c|c|c||c}
$q$ & $m$ & $n$ & $k$ & $\ddesigned$ & $\ell$ & $s$ & $t$ & $\PfailPower$ & $\PfailGS$  & $N$ \\
\hline \hline
$4$ & $15$ &  $64$ & $10$ &  $49$ & $4$ & $2$ & $28$ & $0$ & $0$ & $10^4$ \\
    &      &       &      &       &     &     & $\phantom{{}^+}29^+$ & $0$ & $3.30 \cdot 10^{-3}$ & $10^4$ \\
    &      &       &      &       &     &     & $30$ & $9.93 \cdot 10^{-1}$ & $9.39 \cdot 10^{-1}$ & $10^4$ \\
\hline
$5$ & $55$ & $125$ & $46$ &  $70$  & $3$ & $2$ & $35$ & $0$ & $0$ & $10^4$ \\
    &      &       &      &      &     &     & $\phantom{{}^+}36^+$ & $0$ & $4.00 \cdot 10^{-4}$ & $10^4$ \\
    &      &       &      &      &     &     & $37$ & $9.57 \cdot 10^{-1}$ & $9.60 \cdot 10^{-1}$ & $10^4$ \\
\hline
$5$ & $20$ & $125$ & $11$ & $105$ & $5$ & $2$ & $67$ & $0$ & $0$ & $10^3$ \\
    &      &       &      &       &     &     & $\phantom{{}^+}68^+$ & $0$ & $7.00 \cdot 10^{-3}$ & $10^3$ \\
    &      &       &      &       &     &     & $69$ & $9.91 \cdot 10^{-1}$ & $9.60 \cdot 10^{-1}$ & $10^3$ \\
\hline
$7$ & $70$ & $343$ & $50$ & $273$ & $3$ & $2$ & $160$ & $0$ & $0$ & $10^3$ \\
    &      &       &      &       &     &     & $\phantom{{}^+}161^+$ & $0$ & $0$ & $10^3$ \\
    &      &       &      &       &     &     & $162$ & $9.78 \cdot 10^{-1}$ & $9.86 \cdot 10^{-1}$ & $10^3$ \\
\hline
$7$ & $70$ & $343$ & $50$ & $273$ & $4$ & $2$ & $168$ & $0$ & $0$ & $10^3$ \\
    &      &       &      &       &     &     & $\phantom{{}^+}169^+$ & $0$ & $0$ & $10^3$ \\
    &      &       &      &       &     &     & $170$ & $9.79 \cdot 10^{-1}$ & $2.2 \cdot 10^{-2}$ & $10^3$ \\
\hline 
$7$ & $55$ & $343$ & $35$ & $288$ & $4$ & $2$ & $183$ & $0$ & $0$ &$10^3$ \\
    &      &       &      &       &     &     & $\phantom{{}^+}184^+$ & $0$ & $0$ & $10^3$ \\
    &      &       &      &       &     &     & $185$ & $9.82 \cdot 10^{-1}$ & $1.9 \cdot 10^{-2}$ & $10^3$
\end{tabular}
}
\end{table}

It can be observed that both algorithms can almost always correct $\decodingradius$ many errors, improving upon classical power decoding.
Also, neither of the two algorithms is generally superior in terms of failure probability.

When comparing the two algorithms, one has to keep in mind that the GS algorithm is guaranteed to work only up to $n [1-\tfrac{s+1}{2(\ell+1)}] - \tfrac{\ell}{2s} \mH - \tfrac{g}{s} = \decodingradius - \tfrac{g}{s} + \tfrac{\ell}{s(\ell+1)}$ errors.

\subsection{General Case}
\label{ssec:sim_general}

We now turn to the general case of $h > 1$, where the previous best relative decoding radius is $t_{\mathrm{K}} = \tfrac{\IntDegree}{\IntDegree+1}(n-\mH)$ \cite{kampf2014bounds}.
The simulations results for various code and decoder parameters are given in Table~\ref{tab:IH_failure_rate}.

\begin{table}[h!]
\caption{Observed failure rate of Algorithm~\ref{alg:IH_improved_power} ($\PfailPower$) for $h>1$. Code parameters $q,\mH,n,k,\ddesigned,\IntDegree$. Decoder parameters $\ell,s$. Number of errors $t$ (${^+}t=\decodingradius$ decoding radius as in \eqref{eq:decoding_radius}). Number of experiments $N$. Previous best decoding radius $t_\mathrm{K}$ \cite{kampf2014bounds}.}
\label{tab:IH_failure_rate}
\centering
{
\renewcommand{\arraystretch}{1.2}
\setlength{\tabcolsep}{5pt}
\begin{tabular}{c|c||c|c|c|c||c|c||c|c||c||c}
$q$ & $\mH$ & $n$ & $k$ & $\ddesigned$ & $\IntDegree$ & $\ell$ & $s$ & $t$ & $\PfailPower$ & $N$ & $t_\mathrm{K}$ \\
\hline \hline
$4$ & $15$ &  $64$ & $10$ &  $49$ & $2$ & $3$ & $2$ & $\phantom{{}^+}35^+$ & $0$ & $10^5$ & $32$ \\
    &      &       &      &       &     &     &     & $36$ & $9.18 \cdot 10^{-1}$ & $10^3$ & $32$ \\
\hline
$4$ & $15$ &  $64$ & $10$ &  $49$ & $2$ & $5$ & $3$ & $\phantom{{}^+}36^+$ & $0$ & $10^3$ & $32$ \\
    &      &       &      &       &     &     &     & $37$ & $9.31 \cdot 10^{-1}$ & $10^3$ & $32$ \\
\hline
$4$ & $15$ &  $64$ & $10$ &  $49$ & $3$ & $3$ & $2$ & $\phantom{{}^+}38^+$ & $0$ & $10^5$ & $36$ \\
    &      &       &      &       &     &     &     & $39$ & $9.42 \cdot 10^{-1}$ & $10^3$ & $36$ \\
\hline
$4$ & $15$ &  $64$ & $10$ &  $49$ & $3$ & $4$ & $3$ & $\phantom{{}^+}39^+$ & $0$ & $10^2$ & $36$ \\
    &      &       &      &       &     &     &     & $40$ & $1$ & $10^2$ & $36$ \\
\hline
$4$ & $22$ &  $64$ & $17$ &  $42$ & $2$ & $4$ & $3$ & $\phantom{{}^+}29^+$ & $0$ & $10^3$ & $28$ \\
    &      &       &      &       &     &     &     & $30$ & $9.44 \cdot 10^{-1}$ & $10^3$ & $28$ \\
\hline \hline
$5$ & $20$ & $125$ & $11$ & $105$ & $2$ & $3$ & $2$ & $\phantom{{}^+}79^+$ & $0$ & $10^5$ & $70$ \\
    &      &       &      &       &     &     &     & $80$ & $9.37 \cdot 10^{-1}$ & $10^3$ & $70$ \\
\hline
$5$ & $20$ & $125$ & $11$ & $105$ & $2$ & $4$ & $2$ & $\phantom{{}^+}81^+$ & $0$ & $10^3$ & $70$ \\
    &      &       &      &       &     &     &     & $82$ & $9.93 \cdot 10^{-1}$ & $10^3$ & $70$ \\
\hline
$5$ & $20$ & $125$ & $11$ & $105$ & $3$ & $3$ & $2$ & $\phantom{{}^+}86^+$ & $0$ & $10^3$ & $78$ \\
    &      &       &      &       &     &     &     & $87$ & $9.94 \cdot 10^{-1}$ & $10^3$ & $78$ \\
\hline \hline
$5$ & $55$ & $125$ & $46$ & $70$  & $2$ & $4$ & $3$ & $\phantom{{}^+}48^+$ & $0$ & $10^3$ & $46$ \\
    &      &       &      &       &     &     &     & $49$ & $9.86 \cdot 10^{-1}$ & $10^3$ & $46$ \\
\hline \hline
$7$ & $90$ & $343$ & $70$ & $253$ & $2$ & $3$ & $2$ & $\phantom{{}^+}183^+$ & $0$ & $10^3$ & $168$ \\
    &      &       &      &       &     &     &     & $184$ & $9.72 \cdot 10^{-1}$ & $10^3$ & $168$ \\
\hline \hline
$8$ & $128$ & $512$ & $101$ & $384$ & $2$ & $3$ & $2$ & $\phantom{{}^+}281^+$ & $0$ & $10^2$ & $256$ \\
    &      &       &      &       &     &     &     & $282$ & $1$ & $10^2$ & $256$ \\
\end{tabular}
}
\end{table}

In all tested cases, Algorithm~\ref{alg:IH_improved_power} corrected all decoding trials up to $\decodingradius$ many errors and failed with large observed probability one error beyond this radius.

\section{Efficiently Finding a Minimal Solution of Problem~\ref{prob:IH_decoding_SRP}}\label{sec:complexity}

We use the $\FX$-vector representation of an element of $\Rback$ (cf.~\cite{nielsen2015sub}) to reformulate Problem~\ref{prob:IH_decoding_SRP} over $\FX$.
Recall that for $a \in \Rback$, we can write $a = \sum_{i=0}^{q-1} a_i Y^i \in \Rback$ with unique $a_i \in \FX$. Then, the \emph{vector representation} \cite{nielsen2015sub} of $a$ is defined by $\vr(a) = (a_0,\dots,a_{q-1}) \in \FX^q$.
Note that $q\deg(a_i)+i(q+1)\leq \degH(a)$. For $a,b \in \Rback$ it can be shown that
\begin{align*}
\vr(a+b) = \vr(a) + \vr(b), \qquad
\vr(ab) = \vr(a) \mr(b) \XiM, 
\end{align*}
where $\mr(b) \in \FX^{q \times (2q-1)}$ and $\XiM \in \FX^{(2q-1) \times q}$ are defined by
\begin{align*} \small
\mr(b) := 
\begin{bmatrix}
b_0 & b_1 & b_2 & \dots & b_{q-1} & & &\\
    & b_0 & b_1 & \dots & b_{q-2} & b_{q-1} & &\\
    &     & \ddots & \ddots & \dots & \ddots & \ddots & \\
    &     &        & b_0    & b_1 & \dots & b_{q-2} & b_{q-1} 
\end{bmatrix},
\quad
\XiM
:=
\begin{bmatrix}
1 &   &        &    \\
  & 1 &        &    \\
  &   & \ddots &    \\
  &   &        & 1  \\
X^{q+1}  & -1  &   &    \\
 & X^{q+1}  & -1  &    \\
 & \ddots  & \ddots  &    \\
 &   & X^{q+1}   &  -1  \\
\end{bmatrix}.
\end{align*}
Note further that for $c \in \Fqtwo[X]$ we have simply $\mr(ac) = \mr(a)c$.
Using the notation above, we can reformulate Problem~\ref{prob:IH_decoding_SRP} into the following problem over $\Fqtwo[X]$.
In the following, let $[q)$ denote $\{0,\dots,q-1\}$.

\begin{problem}\label{prob:IH_decoding_SRP_FX}
Given positive integers $s \leq \ell$, $\R$ and $G$ as in Section~\ref{sec:key_equations}, and
\begin{align*}
\AM{\i,\j} := \mr(A_{\i,\j}) \XiM = \mr\left(\tbinom{\j}{\i} \R^{\j-\i} G^\len{\i}\right) \XiM \in \FX^{q \times q}
\end{align*}
for all $\i \in \Iset := \{\i \in \NN_0^\IntDegree : 0 \leq \len{\i} < s\}$ and $\j \in \Jset := \{\j \in \NN_0^\IntDegree : 1 \leq \len{\j} \leq \ell\}$.
Find $\lambda_{\i,\iota}, \psi_{\j,\kappa} \in \FX$ for $\i \in \Iset$, $\j \in \Jset$, $\iota,\kappa \in [q)$, not all zero, such that
\begin{align*}
\psi_{\j,\kappa} &= \sum_{\i \in \Iset} \sum_{\iota=0}^{q-1} \lambda_{\i,\iota} \AMe{\i,\j}_{\iota,\kappa} &&1 \leq \len{\j} < s, \\
\psi_{\j,\kappa} &\equiv \sum_{\i \in \Iset} \sum_{\iota=0}^{q-1} \lambda_{\i,\iota} \AMe{\i,\j}_{\iota,\kappa} \mod G^s  &&1 \leq \len{\j} < s, \\
\max_{\iota \in [q)} \left\{q \deg \lambda_{\0,\iota} + \iota(q+1) \right\} &\leq q \deg \lambda_{\i,\iota}  + \iota(q+1) - \len{\i}(2g-1) &&0 \leq \len{\i} < s, \, \iota \in [q) \\ 
\max_{\iota \in [q)} \left\{q \deg \lambda_{\0,\iota} + \iota(q+1) \right\} &\leq q \deg \psi_{\j,\kappa}  + \kappa(q+1) - \len{\j}\mH &&1 \leq \len{\j} \leq \ell, \, \kappa \in [q) \\ 
\end{align*}
\end{problem}

Similar to its $\Rback$-equivalent, we define the degree of a solution of the above problem to be $\max_{\iota \in [q)} \left\{q \deg \lambda_{\0,\iota} + \iota(q+1) \right\}$ and call the solution monic if the leading coefficient of the $\lambda_{\0,\iota}$ that maximizes $\max_{\iota \in [q)} \left\{q \deg \lambda_{\0,\iota} + \iota(q+1) \right\}$ is $1$. The following statement establishes the connection between Problem~\ref{prob:IH_decoding_SRP} and Problem~\ref{prob:IH_decoding_SRP_FX}.

\begin{theorem}
Let $\tau \in \ZZ_{\geq 0}$.
Then, $\lambda_\i,\psi_\j \in \Rback$ for $\i \in \Iset$ and $\j \in \Jset$ is a solution of degree $\tau$ of Problem~\ref{prob:IH_decoding_SRP} if and only if
\begin{align*}
[\lambda_{\i,0}, \dots, \lambda_{\i,q-1}] &:= \vr(\lambda_\i) \\
[\psi_{\j,0}, \dots, \psi_{\j,q-1}] &:= \vr(\psi_\j)
\end{align*}
is monic solution of degree $\tau$ of Problem~\ref{prob:IH_decoding_SRP_FX}.
\end{theorem}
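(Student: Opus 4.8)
The plan is to exploit the $\Fqtwo[X]$-linearity of the vector representation established just before the statement, namely $\vr(a+b) = \vr(a)+\vr(b)$, $\vr(ab) = \vr(a)\mr(b)\XiM$, and $\mr(ac) = \mr(a)c$ for $c \in \Fqtwo[X]$. The translation between the two problems is essentially a dictionary, so the proof will proceed equation by equation through Problem~\ref{prob:IH_decoding_SRP}, applying $\vr(\cdot)$ to each defining relation and checking that it becomes exactly the corresponding relation of Problem~\ref{prob:IH_decoding_SRP_FX}; since $\vr$ is an $\Fqtwo$-linear bijection $\Rback \to \FX^q$, this establishes the ``if and only if''.

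First I would treat the (in)equalities \eqref{eq:IH_decoding_SRP_equation} and \eqref{eq:IH_decoding_SRP_congruence}. Applying $\vr$ to $\psi_\j = \sum_{\i \in \Iset} \lambda_\i A_{\i,\j}$ and using additivity and the product rule gives $\vr(\psi_\j) = \sum_{\i} \vr(\lambda_\i)\, \mr(A_{\i,\j})\XiM = \sum_\i \vr(\lambda_\i)\AM{\i,\j}$, whose $\kappa$-th coordinate is $\sum_{\i}\sum_{\iota} \lambda_{\i,\iota}\AMe{\i,\j}_{\iota,\kappa}$, exactly the first family of equations in Problem~\ref{prob:IH_decoding_SRP_FX}. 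For the congruence modulo $G^s$ I would note that $G \in \FX$, so $G^s$ divides a polynomial $b\in\Rback$ in the ring $\Rback$ if and only if $G^s$ divides every $\FX$-entry of $\vr(b)$ (since the $Y$-monomials are $\FX$-independent); combined with $\mr(ac) = \mr(a)c$ this shows the $\Rback$-congruence passes coordinatewise to the $\FX$-congruences of Problem~\ref{prob:IH_decoding_SRP_FX}, and conversely.

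Next I would handle the degree conditions \eqref{eq:IH_decoding_SRP_deg_lambdai} and \eqref{eq:IH_decoding_SRP_deg_psii} together with the notion of ``degree $\tau$'' and monicity. The key fact is that for $a \in \Rback$ with $\vr(a) = (a_0,\dots,a_{q-1})$ one has $\degH(a) = \max_{\iota \in [q)}\{q\deg a_\iota + \iota(q+1)\}$: indeed $\degH(X^\mu Y^\iota) = q\mu + \iota(q+1)$ and, by the injectivity of $\degH$ on the monomial basis, distinct monomials have distinct $\degH$, so no cancellation of leading terms across different $\iota$ can occur. Substituting this identity for $\degH\lambda_\i$, $\degH\lambda_\0$, and $\degH\psi_\j$ turns \eqref{eq:IH_decoding_SRP_deg_lambdai}--\eqref{eq:IH_decoding_SRP_deg_psii} verbatim into the two inequalities of Problem~\ref{prob:IH_decoding_SRP_FX}, identifies the two notions of ``degree $\tau$'', and matches the two monicity conditions (the leading coefficient of $\lambda_\0$ as an element of $\Rback$ is the leading coefficient of the $\lambda_{\0,\iota}$ attaining the maximum). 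Finally I would observe that ``not all zero'' in Problem~\ref{prob:IH_decoding_SRP_FX} corresponds, via the bijectivity of $\vr$, to $\lambda_\0 \neq 0$, which is implied by the monicity requirement on $\lambda_\0$ in Problem~\ref{prob:IH_decoding_SRP}, and conversely a monic solution of Problem~\ref{prob:IH_decoding_SRP_FX} pulls back through $\vr^{-1}$ to a solution of Problem~\ref{prob:IH_decoding_SRP} with monic $\lambda_\0$. The main obstacle, such as it is, is bookkeeping: making sure the $\XiM$-multiplication in $\vr(ab) = \vr(a)\mr(b)\XiM$ is threaded correctly so that $\AM{\i,\j} = \mr(A_{\i,\j})\XiM$ really is the matrix appearing in Problem~\ref{prob:IH_decoding_SRP_FX}, and carefully justifying the $\degH$-versus-max identity and the coordinatewise divisibility by $G^s$; there is no conceptual difficulty once those two lemmas about $\vr$ are in hand.
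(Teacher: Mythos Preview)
Your proposal is correct and follows essentially the same approach as the paper: apply $\vr$ to the defining relations, use that $G^s \in \FX$ so the $\Rback$-congruence becomes coordinatewise $\FX$-congruences, and invoke the identity $\degH(a) = \max_{\iota \in [q)}\{q\deg a_\iota + \iota(q+1)\}$ for the degree conditions. If anything, you are more thorough than the paper, which handles only the $\len{\j} \geq s$ case explicitly (by writing out the quotient $u_\j$ rather than phrasing it as coordinatewise divisibility) and dismisses the rest as analogous or immediate.
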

\begin{proof}
  If $\lambda_\i$ and $\psi_\j$ forms a solution to Problem~\ref{prob:IH_decoding_SRP} this means for $\len{\j} \geq s$ that there is some $u_\j \in \Rback$ such that:
  \begin{align*}
    \psi_\j &= \sum_{\substack{\i \smaller \j \\ \len{\i} < s}} \lambda_\i A_{\i,\j} + u_\j G^s  &\iff \\
    \vr(\psi_\j) &= \sum_{\substack{\i \smaller \j \\ \len{\i} < s}} \vr(\lambda_\i)\mr(A_{\i,\j})\XiM + \mr(u_\j) G^s \ .
  \end{align*}
  since $G^s \in \Fqtwo[X]$.
  This implies element-wise the congruence of Problem~\ref{prob:IH_decoding_SRP_FX}.
  The opposite direction is analogous, as is the case $\len{\j} < s$.
  The degree restrictions follow immediately from $\degH(a) = \max_{\iota \in [q)}\{ q \deg a_\iota + \iota(q+1) \}$ for any $a \in \Rback$ and $\vr(a) = (a_1,\ldots,a_{q-1})$.
\end{proof}

Problem~\ref{prob:IH_decoding_SRP_FX} is of the same instance as the problem discussed in \cite[Section~V.B]{nielsen2015sub},\footnote{Using this approach, it is necessary to reformulate the equations for $1 \leq \len{\j} < s$ into congruences modulo $x^\xi$, where $\xi$ is greater than the largest possible degree of the $\lambda_{\i,\iota} \AMe{\i,\j}_{\iota,\kappa}$ for the maximal number of errors that we expect to corrected.} which can be solved by transforming an $\FX$-module basis that depends on the entries of $\AM{\i,\j}$ and the relative degree bounds in Problem~\ref{prob:IH_decoding_SRP_FX}, into a reduced polynomial matrix form (weak Popov form). Using this approach, finding a minimal solution of Problem~\ref{prob:IH_decoding_SRP} can be implemented in
\begin{align*}
\Osoft\left( \tbinom{\IntDegree+\ell}{\IntDegree}^{\omega} s n^{\frac{\omega+2}{3}}  \right) \subseteq  \Osoft\left( \ell^{\IntDegree\omega} s n^{\frac{\omega+2}{3}} \right)
\end{align*}
operations over $\Fqtwo$, where $2 \leq \omega \leq 3$ is the matrix multiplication exponent.
In \cite{nielsen2016power}, a similar kind of problem was reduced to finding solution bases of so-called Pad\'e approximation problems. In this way, the complexity can be slightly reduced to
\begin{align*}
\Osoft\left( \tbinom{\IntDegree+s-1}{\IntDegree} \tbinom{\IntDegree+\ell}{\IntDegree}^{\omega-1} s n^{\frac{\omega+2}{3}}  \right) \subseteq  \Osoft\left( \ell^{\IntDegree(\omega-1)} s^{\IntDegree+1} n^{\frac{\omega+2}{3}} \right)
\end{align*}
operations over $\Fqtwo$.
In order to achieve the asymptotic decoding radius, the code parameters must be chosen as in Section~\ref{ssec:asymptotic_anaylsis}. In this case, the two asymptotic complexity statements above coincide and we get the following result.
\begin{theorem}
For a fixed code of rate $R = \tfrac{k}{n}$ and any constant $\varepsilon > 0$, we can choose $s, \ell \in O(1/\varepsilon)$ such that $\decodingradius \geq n(1 - (R+\tfrac{g-1}{n})^{\frac{\IntDegree}{\IntDegree+1}} - \varepsilon)$.
In this case, Algorithm~\ref{alg:IH_improved_power} costs $O^\sim((1/\varepsilon)^{\IntDegree \omega + 1} n^{\frac{\omega+2}{3}})$.
\end{theorem}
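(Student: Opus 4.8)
The plan is to combine the asymptotic parameter choice of Section~\ref{ssec:asymptotic_anaylsis} with the two complexity estimates stated just before the theorem, checking that the error terms hidden in the $O(\tfrac{1}{i})$ of Theorem~\ref{thm:IH_asymptotic_relative_decoding_radius} are controlled linearly in the parameter $i$. First I would set $k = Rn$ and substitute $\mH = k + g - 1$, so that $\tfrac{\mH}{n} = R + \tfrac{g-1}{n}$; then applying Theorem~\ref{thm:IH_asymptotic_relative_decoding_radius} with $\gamma = \sqrt[\IntDegree+1]{\mH/n}$ and $(\ell_i, s_i) = (i, \round{\gamma i}+1)$ gives $\decodingradius(\ell_i, s_i) = n\big(1 - (R + \tfrac{g-1}{n})^{\frac{\IntDegree}{\IntDegree+1}}\big) - O(n/i)$. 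Since the constant in this $O(n/i)$ depends only on the fixed code (through $n$, $\IntDegree$, and $\gamma$), there is an absolute constant $c$ with $\decodingradius(\ell_i,s_i) \geq n\big(1 - (R+\tfrac{g-1}{n})^{\frac{\IntDegree}{\IntDegree+1}}\big) - c/i$; choosing $i := \lceil c/(n\varepsilon)\rceil \in O(1/\varepsilon)$ then yields the claimed bound on $\decodingradius$, and with this choice $\ell = i \in O(1/\varepsilon)$ and $s = \round{\gamma i}+1 \in O(1/\varepsilon)$ as well.

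Next I would feed these parameters into the first complexity bound stated above the theorem, namely $\Osoft\big(\ell^{\IntDegree\omega} s\, n^{\frac{\omega+2}{3}}\big)$ operations over $\Fqtwo$ for finding a minimal solution of Problem~\ref{prob:IH_decoding_SRP}. With $\ell, s \in O(1/\varepsilon)$ this becomes $\Osoft\big((1/\varepsilon)^{\IntDegree\omega} \cdot (1/\varepsilon) \cdot n^{\frac{\omega+2}{3}}\big) = \Osoft\big((1/\varepsilon)^{\IntDegree\omega + 1} n^{\frac{\omega+2}{3}}\big)$, which is the cost of line~\ref{line:minimal_solution} of Algorithm~\ref{alg:IH_improved_power}. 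It remains to argue that the other steps of Algorithm~\ref{alg:IH_improved_power} do not dominate: computing $\R$ and $G$ costs $\Osoft(n)$ by Lemma~\ref{lem:IP}; the divisions $\psi_{\u_i}/\lambda_\0$ over $\Rback$ and the final re-encoding and membership/degree checks each cost at most $\Osoft(n)$ (or a small power of $\IntDegree$ times that), which is absorbed. Hence the total is $\Osoft\big((1/\varepsilon)^{\IntDegree\omega + 1} n^{\frac{\omega+2}{3}}\big)$ as claimed.

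The main obstacle I anticipate is not any single estimate but the bookkeeping needed to make the phrase ``hidden constant depends only on the fixed code'' precise: one must track that the $O(\tfrac{1}{i})$ term in Theorem~\ref{thm:IH_asymptotic_relative_decoding_radius}, when multiplied back by $n$, has a constant that is uniform in $\varepsilon$ (it does, since $n$, $R$, $g$, $\IntDegree$ are all fixed before $\varepsilon$ is introduced), so that solving $c/i \leq n\varepsilon$ for $i$ really does give $i \in O(1/\varepsilon)$ rather than something depending on $n$. A secondary point is to confirm the side condition $\tau + \ell_i \mH < s_i n$ used throughout Section~\ref{sec:decoding_radius}; this was already noted to hold for the parameter family of Theorem~\ref{thm:IH_asymptotic_relative_decoding_radius} whenever $\tau \leq s_i \cdot \decodingradius(\ell_i,s_i)$, so in particular it holds for the relevant range of error weights and no extra work is needed there. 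Everything else is routine substitution into the two displayed complexity bounds.
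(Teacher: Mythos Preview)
Your proposal is correct and follows exactly the paper's own route: invoke Theorem~\ref{thm:IH_asymptotic_relative_decoding_radius} with $\mH/n = R + (g-1)/n$ to get $s,\ell \in O(1/\varepsilon)$, then plug these into the displayed complexity bound for Line~\ref{line:minimal_solution} and observe (as the paper does, citing \cite{nielsen2015sub}) that the remaining steps of Algorithm~\ref{alg:IH_improved_power} are negligible. Your write-up is in fact more explicit than the paper's three-sentence proof, and the bookkeeping concerns you flag (uniformity of the $O(1/i)$ constant, the side condition $\tau+\ell\mH < sn$) are handled correctly.
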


\begin{proof}
The first statement directly follows from Theorem~\ref{thm:IH_asymptotic_relative_decoding_radius}.
The pre- and post-computations in Algorithm~\ref{alg:IH_improved_power} are negligible compared to Line~\ref{line:minimal_solution} by similar arguments as in \cite{nielsen2015sub}.
The complexity thus follows by the arguments above.
\end{proof}

\section{Comparison to Interleaved Reed--Solomon Codes}\label{sec:comparison}

An $h$-interleaved code over some field $\mathbb{F}_Q$ over the burst error channel can equivalently be considered as a code over $\mathbb{F}_{Q^h}$ considered over the $Q^h$-ary channel.
This allows comparing the decoding capability of interleaved $1$-H codes with other constructions of short codes over large fields, most notably RS codes and interleaved RS codes, see Figure~\ref{fig:comparison_IRS_IH} for the case $Q^h = q^6$.

More precisely, for any $h \in \ZZ_{>0}$, we have several ways of obtaining $[n,k]$ codes over $\mathbb{F}_{q^{6h}}$ for $n = q^3$ and some dimension $k < n$.
We will compare the following relative decoding radii:
\begin{description}
  \item[$t_\mathrm{RS}$:] an RS code over $\mathbb{F}_{q^{6h}}$ decoding up to the Johnson radius using one of \cite{guruswami1998improved,wu_new_2008,nielsen2016power}.\footnote{%
    As pointed out in \cite{sidorenko_decoding_2008}, an RS code over $\mathbb{F}_{q^{6h}}$ whose evaluation points all lie in $\mathbb{F}_{q^3}$ are equivalent to a $2h$-interleaved RS codes over $\mathbb{F}_{q^3}$, i.e.~can be decoded up to $t_{\mathrm{IRS}}$.
    In our comparison here we therefore consider RS codes with arbitrary evaluation points for which this equivalence doesn't hold.
  }
  \item[$t_{IRS}$:] $2h$-interleaved RS code over $\mathbb{F}_{q^3}$ using  one of \cite{cohn2013approximate,puchinger2017decoding}.  \label{enum:comp_irs}
  \item[$t_{1H}$:] $3h$-interleaved $1$-H code over $\Fqtwo$ using the proposed algorithm.
\end{description}

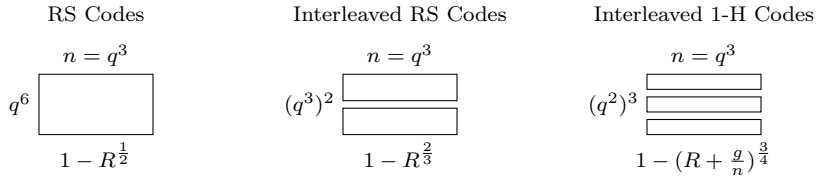
\begin{figure}[h!]
\begin{center}
\begin{tikzpicture}
\def\xdist{2.25}
\def\ydist{1.5}
\def\exampleydist{-1.5}
\def\Intydist{-5}
\def\Intxdist{\xIRS-3.7}
\def\xRS{-8}
\def\xIRS{-4}
\def\yRS{0}
\def\cwwidth{1.5}
\def\cwheight{0.2}
\def\cwdist{0.3}

\node[align=center] (H1) at (0,-0.2*\ydist) {Interleaved $1$-H Codes};
\node[align=center] (R1) at (\xIRS+0,\yRS-0.2*\ydist) {Interleaved RS Codes};
\node[align=center] (R) at (\xRS+0,\yRS-0.2*\ydist) {RS Codes};

\draw (\xRS-0.5*\cwwidth,\exampleydist-0.5*\cwheight-\cwdist) rectangle (\xRS+0.5*\cwwidth,\exampleydist+0.5*\cwheight+\cwdist);
\node[left] at (\xRS-0.5*\cwwidth,\exampleydist) {$q^6$};
\node[above] at (\xRS,\exampleydist+0.5*\cwheight+\cwdist) {$n=q^3$};
\node[below] at (\xRS,\exampleydist-0.5*\cwheight-\cwdist) {$1-R^{\frac{1}{2}}$};

\draw (\xIRS-0.5*\cwwidth,\exampleydist-1.25*\cwheight+\cwdist) rectangle (\xIRS+0.5*\cwwidth,\exampleydist+0.5*\cwheight+\cwdist);
\draw (\xIRS-0.5*\cwwidth,\exampleydist-0.5*\cwheight-\cwdist) rectangle (\xIRS+0.5*\cwwidth,\exampleydist+1.25*\cwheight-\cwdist);
\node[left] at (\xIRS-0.5*\cwwidth,\exampleydist) {$(q^3)^2$};
\node[above] at (\xIRS,\exampleydist+0.5*\cwheight+\cwdist) {$n=q^3$};
\node[below] at (\xIRS,\exampleydist-0.5*\cwheight-\cwdist) {$1-R^{\frac{2}{3}}$};

\draw (0-0.5*\cwwidth,\exampleydist-0.5*\cwheight+\cwdist) rectangle (0+0.5*\cwwidth,\exampleydist+0.5*\cwheight+\cwdist);
\draw (0-0.5*\cwwidth,\exampleydist-0.5*\cwheight)         rectangle (0+0.5*\cwwidth,\exampleydist+0.5*\cwheight);
\draw (0-0.5*\cwwidth,\exampleydist-0.5*\cwheight-\cwdist) rectangle (0+0.5*\cwwidth,\exampleydist+0.5*\cwheight-\cwdist);
\node[left] at (0-0.5*\cwwidth,\exampleydist) {$(q^2)^3$};
\node[above] at (0,\exampleydist+0.5*\cwheight+\cwdist) {$n=q^3$};
\node[below] at (0,\exampleydist-0.5*\cwheight-\cwdist) {$1-(R+\tfrac{g}{n})^{\frac{3}{4}}$};

\end{tikzpicture}
\end{center}
\caption{Example: Comparison of interleaved Reed--Solomon and one-point Hermitian codes of length $n=q^3$ over an overall field size of $q^6$.}
\label{fig:comparison_IRS_IH}
\end{figure}

These values are as follows:
\begin{align*}
  t_\mathrm{RS} &= 1-(\tfrac{k-1}{n})^{\frac 1 2}
                     &
  t_\mathrm{IRS} &= 1-(\tfrac{k-1}{n})^{\frac{2h}{2h+1}},
                      &
t_\mathrm{IH} &= 1-(\tfrac{k-1}{n}-\tfrac{g}{n})^{\frac{3h}{3h+1}},
\end{align*}
The asymptotics are already clear: since $\tfrac{g}{n} \to 0$ for $n \to \infty$, we can asymptotically achieve larger decoding radii with interleaved $1$-H codes than with interleaved RS codes, when considering comparable overall field size.
Below follows some concrete parameter examples.

$q=13$ is the smallest prime power for which $t_{\mathrm{IH}} > t_{\mathrm{IRS}}$ for rate $1/2$, i.e.~both interleaved codes can be considered as $[2197,1098]$ codes over $\mathbb{F}_{13^6}$, and the decoding radii are $t_\mathrm{RS} = 644$, $t_\mathrm{IRS} = 814$ and $t_\mathrm{IH} = 823$.
A list of decoding radii of rate $1/2$ codes with even $q$ is given in Table~\ref{tab:example_comparison_tauIH_tauIRS}.

\begin{table}[h]
\caption{Examples for $t_\mathrm{IH}$ and $t_\mathrm{IRS}$ for rate $1/2$ codes of several lengths for $\IntDegreeRS = 2$ and $\IntDegreeH = 3$.}
\label{tab:example_comparison_tauIH_tauIRS}
\centering
{
\renewcommand{\arraystretch}{1.2}
\setlength{\tabcolsep}{5pt}
\begin{tabular}{c|c|c||c|c|c||c|c}
$q$     & $n=q^3$    & $k = \tfrac{n}{2}$ & $t_\mathrm{RS}$ & $t_\mathrm{IRS}$ & $t_\mathrm{IH}$ & $t_\mathrm{IH}/t_\mathrm{RS} \approx$ & $t_\mathrm{IH}/t_\mathrm{IRS} \approx$ \\
\hline
$2^{3}$ & $     512$ & $     256$         & $     150$      & $     190$       & $     183$      & $1.22$                                & $0.96$                                 \\ 
$2^{4}$ & $    4096$ & $    2048$         & $    1200$      & $    1516$       & $    1555$      & $1.30$                                & $1.03$                                 \\ 
$2^{5}$ & $   32768$ & $   16384$         & $    9598$      & $   12126$       & $   12844$      & $1.34$                                & $1.06$                                 \\ 
$2^{6}$ & $  262144$ & $  131072$         & $   76780$      & $   97004$       & $  104478$      & $1.36$                                & $1.08$                                 \\ 
$2^{7}$ & $ 2097152$ & $ 1048576$         & $  614242$      & $  776029$       & $  842936$      & $1.37$                                & $1.09$
\end{tabular}
}
\end{table}

\section{Conclusion}

We have presented a new decoding algorithm for $\IntDegree$-interleaved one-point Hermitian codes based on the improved power decoder for Reed--Solomon codes in \cite{nielsen2016power}, its generalization to $\IntDegree$-interleaved Reed--Solomon codes in \cite{puchinger2017decoding}, and the power decoder for one-point Hermitian codes in \cite{kampf2014bounds,nielsen2015sub}.

The maximal decoding radius of the new algorithm is $n(1-(R+\tfrac{g-1}{n})^{\frac{\IntDegree}{\IntDegree+1}}-\varepsilon)$ at a cost of $O^\sim((1/\varepsilon)^{\IntDegree \omega + 1} n^{\frac{\omega+2}{3}})$ operations over $\Fqtwo$, where $2 \leq \omega \leq 3$ is the matrix multiplication exponent, and improves upon previous best decoding radii at all rates.
Experimental results indicate that the algorithm achieves this maximal decoding radius with large probability.

For large $n$, interleaved one-point Hermitian codes achieve larger maximal decoding radii than interleaved Reed--Solomon codes when compared for the same length and overall field size.

In the case $\IntDegree=1$, we obtain a one-point Hermitian codes equivalent of the improved power decoder for Reed--Solomon codes in \cite{nielsen2016power}, which achieves a similar decoding radius as the Guruswami--Sudan list decoder. Simulation results indicate that the new decoder has a similar failure probability for numbers of errors beyond the latter's guaranteed decoding radius.

As for any other power decoding algorithm, both for Reed--Solomon and one-point Hermitian codes, deriving analytic bounds on the failure probability remains an open problem. So far, the only parameters for which such an expression is known are $\IntDegree=1$, $\ell \leq 3$, and $s \leq 2$, cf.~\cite{schmidt2010syndrome,nielsen2016power}.

\bibliographystyle{IEEEtran}
\bibliography{main}

\begin{thebibliography}{10}
\providecommand{\url}[1]{#1}
\csname url@samestyle\endcsname
\providecommand{\newblock}{\relax}
\providecommand{\bibinfo}[2]{#2}
\providecommand{\BIBentrySTDinterwordspacing}{\spaceskip=0pt\relax}
\providecommand{\BIBentryALTinterwordstretchfactor}{4}
\providecommand{\BIBentryALTinterwordspacing}{\spaceskip=\fontdimen2\font plus
\BIBentryALTinterwordstretchfactor\fontdimen3\font minus
  \fontdimen4\font\relax}
\providecommand{\BIBforeignlanguage}[2]{{%
\expandafter\ifx\csname l@#1\endcsname\relax
\typeout{** WARNING: IEEEtran.bst: No hyphenation pattern has been}%
\typeout{** loaded for the language `#1'. Using the pattern for}%
\typeout{** the default language instead.}%
\else
\language=\csname l@#1\endcsname
\fi
#2}}
\providecommand{\BIBdecl}{\relax}
\BIBdecl

\bibitem{guruswami1998improved}
V.~Guruswami and M.~Sudan, ``{Improved Decoding of Reed--Solomon and
  Algebraic-Geometric Codes},'' in \emph{IEEE Annual Symposium on Foundations
  of Computer Science}, 1998, pp. 28--37.

\bibitem{schmidt2010syndrome}
G.~Schmidt, V.~R. Sidorenko, and M.~Bossert, ``{Syndrome Decoding of
  Reed--Solomon Codes Beyond Half the Minimum Distance Based on Shift-Register
  Synthesis},'' \emph{IEEE Transactions on Information Theory}, vol.~56,
  no.~10, pp. 5245--5252, 2010.

\bibitem{kampf2012decoding}
S.~Kampf, ``{Decoding Hermitian Codes - An Engineering Approach},'' Ph.D.
  dissertation, Universit{\"a}t Ulm, 2012.

\bibitem{nielsen2015sub}
J.~S.~R. Nielsen and P.~Beelen, ``{Sub-Quadratic Decoding of One-Point
  Hermitian Codes},'' \emph{IEEE Transactions on Information Theory}, vol.~61,
  no.~6, pp. 3225--3240, 2015.

\bibitem{nielsen2016power}
J.~Rosenkilde, ``{Power Decoding Reed--Solomon Codes up to the Johnson
  Radius},'' \emph{Accepted for: Advances in Mathematics of Communications},
  (2018), arXiv preprint arXiv:1505.02111.

\bibitem{kampf2014bounds}
S.~Kampf, ``{Bounds on Collaborative Decoding of Interleaved Hermitian Codes
  and Virtual Extension},'' \emph{Designs, codes and cryptography}, vol.~70,
  no. 1-2, pp. 9--25, 2014.

\bibitem{krachkovsky1997decoding}
V.~Y. Krachkovsky and Y.~X. Lee, ``{Decoding for Iterative Reed--Solomon Coding
  Schemes},'' \emph{IEEE Trans.\ Magn.}, vol.~33, no.~5, pp. 2740--2742, 1997.

\bibitem{parvaresh2004multivariate}
F.~Parvaresh and A.~Vardy, ``{Multivariate Interpolation Decoding Beyond the
  Guruswami--Sudan Radius},'' in \emph{Proceedings of the 42nd Allerton
  Conference on Communication, Control and Computing}, 2004.

\bibitem{schmidt2007enhancing}
G.~Schmidt, V.~Sidorenko, and M.~Bossert, ``{Enhancing the Correcting Radius of
  Interleaved Reed-Solomon Decoding using Syndrome Extension Techniques},'' in
  \emph{IEEE ISIT}, 2007, pp. 1341--1345.

\bibitem{cohn2013approximate}
H.~Cohn and N.~Heninger, ``{Approximate Common Divisors via Lattices},''
  \emph{The Open Book Series}, vol.~1, no.~1, pp. 271--293, 2013.

\bibitem{wachterzeh2014decoding}
A.~{Wachter-Zeh}, A.~Zeh, and M.~Bossert, ``{Decoding Interleaved Reed--Solomon
  Codes Beyond Their Joint Error-Correcting Capability},'' \emph{Designs, Codes
  and Cryptography}, vol.~71, no.~2, pp. 261--281, 2014.

\bibitem{puchinger2017decoding}
S.~Puchinger and J.~{Rosenkilde n\'e Nielsen}, ``{Decoding of Interleaved
  Reed-Solomon Codes Using Improved Power Decoding},'' \emph{IEEE International
  Symposium on Information Theory}, 2017.

\bibitem{puchinger2017improved}
S.~Puchinger, I.~Bouw, and J.~{Rosenkilde n\'e Nielsen}, ``{Improved Power
  Decoding of One-Point Hermitian Codes},'' in \emph{International Workshop on
  Coding and Cryptography}, 2017, arXiv:1703.07982.

\bibitem{feng_generalized_1989}
G.-L. Feng and K.~K. Tzeng, ``A {Generalized} {Euclidean} {Algorithm} for
  {Multisequence} {Shift}-{Register} {Synthesis},'' \emph{IEEE Transactions on
  Information Theory}, vol.~35, no.~3, pp. 584--594, 1989.

\bibitem{nielsen_generalised_2013}
\BIBentryALTinterwordspacing
J.~S.~R. Nielsen, ``Generalised {Multi}-sequence {Shift}-{Register} {Synthesis}
  using {Module} {Minimisation},'' in \emph{{IEEE} {International} {Symposium}
  on {Information} {Theory}}, 2013, pp. 882--886. [Online]. Available:
  \url{http://ieeexplore.ieee.org/xpls/abs_all.jsp?arnumber=6620353}
\BIBentrySTDinterwordspacing

\bibitem{beckermann_uniform_1994}
\BIBentryALTinterwordspacing
B.~Beckermann and G.~Labahn, ``A {Uniform} {Approach} for the {Fast}
  {Computation} of {Matrix}-{Type} {Padé} {Approximants},'' \emph{SIAM Journal
  on Matrix Analysis and Applications}, vol.~15, no.~3, pp. 804--823, Jul.
  1994. [Online]. Available:
  \url{http://epubs.siam.org/doi/abs/10.1137/S0895479892230031}
\BIBentrySTDinterwordspacing

\bibitem{stein_sagemath_????}
W.~A. Stein \emph{et~al.}, ``{SageMath} {Software},'' http://www.sagemath.org.

\bibitem{wu_new_2008}
Y.~Wu, ``New {List} {Decoding} {Algorithms} for {Reed}-{Solomon} and {BCH}
  {Codes},'' \emph{IEEE Transactions on Information Theory}, vol.~54, no.~8,
  pp. 3611--3630, 2008.

\bibitem{sidorenko_decoding_2008}
V.~Sidorenko, G.~Schmidt, and M.~Bossert, ``{Decoding Punctured
  {Reed}--{Solomon} Codes up to the {Singleton} Bound},'' in
  \emph{International {ITG} {Conference} on {Source} and {Channel} {Coding}},
  2008, pp. 1--6.

\end{thebibliography}

\end{document}